\theoremstyle{plain}
\newcommand\unif{\operatorname{{unif}}}
\definecolor{LightCyan}{rgb}{0.88,1,1}
\newcolumntype{a}{>{\color{red}}r}
\theoremstyle{plain}
\newtheorem{theorem}{Theorem}[section]
\newtheorem{lemma}[theorem]{Lemma}
\newtheorem{assumption}[theorem]{Assumption}
\newtheorem{corollary}[theorem]{Corollary}
\newtheorem{proposition}[theorem]{Proposition}
\theoremstyle{definition}
\theoremstyle{remark}
\newcommand{\proofstring}{Proof}
\renewenvironment{proof}[1][]{\noindent\textbf{\proofstring\ifthenelse{\equal{#1}{}}{:}{~(#1) :}}\xspace}{\hfill$\blacksquare$\medskip\par}
\newcommand{\R}{\mathds{R}}
\newcommand{\E}{\mathds{E}}
\newcommand{\var}{\mathds{V}ar}
\newcommand{\Var}{\var}
\newcommand{\Prob}{\mathds{P}}
\newcommand{\Ind}{\mathds{1}}
\newcommand{\ind}{\Ind}
\definecolor{LightCyan}{rgb}{0.88,1,1}
\newcolumntype{a}{>{\color{red}}r}
\newcolumntype{b}{>{\color{orange}}r}
\begin{document}

\title{A maximum-mean-discrepancy goodness-of-fit test for censored data}

\author{Tamara Fern\'andez\\
  Gatsby Computational Neuroscience Unit\\ 
  University College London\\
  \texttt{t.a.fernandez@ucl.ac.uk} \\
   \and
   Arthur Gretton \\
   Gatsby Computational Neuroscience Unit \\
  University College London\\
   \texttt{gretton@gatsby.ucl.ac.uk} \\
}
\maketitle
\begin{abstract}
  We introduce a kernel-based goodness-of-fit test for censored data, where observations may be missing in random time intervals: a common occurrence in clinical trials and industrial life-testing. The test statistic is straightforward to compute, as is the test threshold, and we establish consistency under the null.
Unlike earlier approaches such as the Log-rank test, we make no assumptions as to how the data distribution might differ from the null, and our test has power against a very rich class of alternatives. 
In experiments, our test outperforms competing approaches for periodic and Weibull hazard functions (where risks are time dependent), and
does not show the failure modes of tests that rely on user-defined features.
Moreover, in cases where classical tests are provably most powerful, our test performs almost as well, while being more general.
\end{abstract}

\section{Introduction}
Survival analysis is a branch of statistics focused on the study of time-to-event data, usually called survival times. This type of data usually appears in applications such as industrial life-testing, death times of patients in clinical trials or duration of unemployment in a population. An important characteristic of this type of data is that survival times may be censored, meaning that we do not observe the actual value of a survival time but a bound for it. For instance, it is not uncommon that in a clinical trial, the actual death time of a patient is only known to be within an interval of time. 

Arguably, the most common type of censoring is independent right-censoring which occurs when, instead of observing the actual survival time, say $X$, we observe a lower bound $T$ for it, i.e., we observe $T$ and we know that $X>T$. Other less common types of censoring mechanisms are independent left and interval censoring. Respectively, left and interval censoring arise when we observe either an upper bound $T$ instead of the failure time $X$ or an interval $(T_l,T_u)\subseteq \R_+$ in which the failure time $X$ falls. A reasonable assumption we make is that the censoring mechanisms are non-informative about the distribution of the survival times $X$.   We will provide a more extensive description of our setting and terminology in Section \ref{sec:problemSetting}.

While in most statistical/machine-learning applications the cornerstone for analysing data is the distribution function, in survival analysis the main objects of study are the hazard function and the survival function. For a survival time $X$ with density $f$ and distribution $F$, its survival function $S$ is defined by $1-F$, and its hazard function $\lambda$ by $f/S$. While the survival $S(t)$ function gives us the probability a patient survives up to time $t$, the hazard $\lambda(t)$ function is the instant risk of death given that she has survived until time $t$. Additionally, we define the cumulative hazard function by $\Lambda(t) = \int_0^t \lambda(x)dx$. It can be check that $S(t) = e^{-\Lambda(t)}$ and thus $S$ and $\lambda$ are in a 1-1 correspondence. 

The hazard function is extremely important in applications, and different families of hazard functions give rise to different problems in the area. Examples of important families are proportional hazards, which in a clinical trial may represent treatments with constant benefit/dis-benefit over time (when compared with the baseline), and crossing hazards, representing treatments that may have a negative impact at the beginning but long-term benefits, e.g. chemotherapy, among other behaviours. Distinguishing between different hazard functions is a fundamental problem in survival analysis.

In this paper, we study goodness of fit, i.e. the problem of testing the null hypothesis $H_0:\lambda=\lambda_0$, or alternatively, $\mathbb{H}_0:S=S_0$, where $\lambda_0$ and $S_0$ denote some specified  hazard and survival functions in the setting of independent right censoring. 

A few methodologies have been proposed to attack this problem. The  Log-rank test is the most popular among practitioners. This test is based on the simple idea of comparing the cumulative hazard function under the null against the empirical cumulative hazard function. Among the good properties of this test, we have that the Log-rank test is the most powerful test for proportional hazard alternatives. Unfortunately, when the true relationship of the hazards is time-dependent it may lead to wrong decisions, i.e. low power \cite{aalenbook}. An option to increase the power of Log-rank tests against time-dependent alternatives is to consider weighted Log-rank tests. These tests have been extensively studied \cite{breslow1975analysis,hollander1979testing,harrington1982class}; see \cite{aalenbook,klein2006survival} for details. By choosing an appropriate weight function, the weighted Log-rank test can be tailored to be optimal under specific alternatives, at the expense of reduced performance against other alternatives. Modern approaches attempt to increase overall test power by considering the combination of several weighted Log-rank tests into a single test-statistic, e.g. \cite{bathke2009combined,BrendelJanssen,ditzhaus2018more}. Nonetheless, weight-based approaches require us to hand-design the  weight functions in advance, in anticipation of a particular set of alternatives. Moreover, the amount of data required by the test grows with the number of weights chosen.

As an alternative to log-rank tests, there exist a number of  Chi-squared tests under censoring \cite{mihalko1980chi}, \cite{akritas1988} and \cite{hollander1992chi}, where the space is first partitioned, and  the empirical probability of uncensored events is then compared in chi squared distance with its expectation (the latter requiring an estimate of the censoring distribution).  See \cite{aalenbook} and \cite{FlemingBook} for more detail. 

Yet another approach, described in \cite{BagIoaKal13}
 is based on defining a {\em kernel density estimate} for the survival function, i.e. for $S=1-F$ (obtained using a slightly modified Kaplan Meier procedure), with the test statistic then defined as the squared difference between this density estimate and the model density.  Since this procedure relies on density estimation as an intermediate step, it has been found to be relatively data-inefficient, compared with more direct tests (see e.g. the recent discussion in  \cite{ditzhaus2018more}). We have independently
 confirmed this issue in our own experiments.

 In the present work, we propose a new goodness of fit test for censored data, based on distances between probability distribution embeddings in a reproducing kernel Hilbert space (RKHS)
 \cite[Chapter 4]{BerTho04}, \cite{SmoGreSonSch07}.  A particular challenge arises due to the unknown censoring distribution: correcting for this directly using e.g. a Kaplan-Meier estimate of the survival time \cite{kaplan1958nonparametric} leads
to a more complex estimator when compared to the
uncensored case, making standard tasks as bootstrapping or computing limiting distributions very difficult. Indeed, in this setting, naively applying the Kaplan-Meier estimator together with standard kernel test \cite{BaringhausHenze88,SzeRiz05} would lead to a potential incorrectly calibrated test.

Instead, we construct a sample mapping that requires no such correction, which we describe in Section \ref{sec:constructionNullDistribution}.
 We emphasise that our approach does not require evaluation nor integration of the hazard function under the null, which can be challenging.
 In  Section \ref{sec:kernelBasedTest} we define 
a test statistic for these transformed data
based on the maximum mean discrepancy \cite{GreBorRasSchetal12}, which is the RKHS distance between two distribution embeddings. The resulting test statistic is a simple V-statistic, and hence the test threshold can  readily be obtained using a wild bootstrap procedure.

In Section \ref{sec:experiments}, we illustrate the performance of our model for a number of use-cases, both for proportional hazards (where the classical Log-rank test is provably most powerful) and for time-varying hazards, including periodic and Weibull hazard functions.  In the case of proportional hazards, we perform almost as well as the most powerful model, despite our test being more general; in the case of periodic hazards, we greatly outperform alternative approaches and in the case of Weibull hazards we also outperform alternative approaches.
\section{Problem setting}\label{sec:problemSetting}

We briefly explain the setting of censored data, and associated challenges. Let $X_1,\ldots,X_n$ be a random sample from a continuous distribution of interest $F$ on $\R_+$ and, independent from this sample, consider $C_1,\ldots,C_n\overset{i.i.d.}{\sim}G$ from a nuisance distribution $G$. The data we observe correspond to the pairs $(T_1,\Delta_1),\ldots (T_n,\Delta_n)$, where $T=\min\{X,C\}$ and $\Delta=\ind\{T=X\}$. This data-framework corresponds to \textit{independent right censoring}.

Given this data, the task of estimating the cumulative distribution function $F=1-S$ of the failure times can not be solved by using the empirical distribution. Indeed, the empirical estimator given by $H_n(x)=\sum_{T_i\leq x}\frac{1}{n}$ approximates the distribution of the minimum of $X$ and $C$, i.e., $H=1-(1-F)(1-G)$. Alternatively, if we drop all the observations that we know are censored, that is, where $\Delta_i=0$ and thus $T_i=\min\{X_i,C_i\}=C_i$, the empirical distribution $H^1_n(x)=\sum_{T_i\leq x,\Delta_i=1}\frac{1}{n}$ approximates the  function $H^1(x)=\int_0^x (1-G(t))dF(t)$. The lack of natural empirical-type estimators for $F$ takes out of competition all the testing approaches which heavily rely on them. 

The non-parametric maximum likelihood estimator of $F$ under independent right censoring is the Kaplan-Meier estimator \cite{kaplan1958nonparametric}, defined as $\bar{F}_n(x)=\sum_{T_{[i:n]}\leq x}W_i$, where $W_i=\frac{\Delta_{[i:n]}}{n}\prod_{j=1}^{i-1}\left(1+\frac{1-\Delta_{[j:n]}}{n-j}\right)$,  $T_{[i:n]}$ denotes the $i$-th order statistic of the sample $\{T_i\}_{i=1}^n$, and $\Delta_{[i:n]}$ is its corresponding censoring indicator. In the particular case in which all the observations are uncensored, the Kaplan-Meier estimator simplifies to the empirical estimator $\bar{F}_n(x)=\hat{F}_n(x)=\sum_{X_i\leq x}\frac{1}{n}$. An immediate drawback of the Kaplan-Meier estimator is  that it can not be written as the sum of independent random variables (note that $W_i$ depends explicitly on the first $i$ data points), and all statistical approaches based on this estimator must address this. An example of this is that not even the standard Central Limit Theorem (CLT) can be applied naively as it is defined for independent random variables (a CLT result for the Kaplan-Meier estimator was proved much later than the standard CLT by Stute  \cite{stute1995} and Akritas \cite{akritas2000}).
\section{Construction of a null distribution}\label{sec:constructionNullDistribution}
Consider the independent right-censoring scheme. Under the null hypothesis $\mathbb{H}_0:F= F_0$, it holds $F_0(X_i)\sim\mathcal{U}(0,1)$, then testing the null hypothesis is equivalent to test for $\mathbb{H}_0:FF_0^{-1}=F_{\unif}$, where $F_{\unif}$ denotes the uniform distribution function on $(0,1)$. Notice that since we have right censored data we do not observe the failure time $X_i$ but instead we observe $T_i=\min(X_i,C_i)$. Nevertheless, since $F_0$ is increasing (it is a distribution function), it holds $F_0(T_i)=F_0(\min\{X_i,C_i\})=\min\{F_0(X_i),F_0(C_i)\}$, and thus the indicator function $\Delta_i$ is consistent with the order of $F_0(X_i)$ and $F_0(C_i)$. Then, we transform our initial problem into testing whether $\{F_0(X_i)\}_{i=1}^n$ follows a uniform distribution based on the right censored data $\{U_i,\Delta_i\}_{i=1}^n$, with $U_i=F_0(T_i)$. For left and interval censoring the same argument applies. 

Up to this point, we have just transformed our problem to test for uniformity, but we still need to deal with the censored data. We overcome this problem by introducing an estimator of the distribution function $FF_0^{-1}$, based on the censored data $\{U_i,\Delta_i\}_{i=1}^n$ which can be written as the sum of independent random variables, and which is unbiased under the null hypothesis. The form of this estimator allows us to use the classical theory of U-statistics to derive the asymptotic distributions related to our test approach.

For right censored data, whenever $\Delta_i=1$, we know that $U_i=F_0(X_i)$ which is exactly the random variable we are interested in. Then, by following the approach of the empirical distribution, we put a point mass of size $1/n$ on $U_i$. Otherwise, if $\Delta_i=0$ then $U_i=F_0(C_i)$. From this event, the only information we can deduce is $F_0(X_i)>F_0(C_i)$. To reflect our lack of information, we distribute the weight $1/n$ associated to $U_i$ uniformly on $(U_i,1)$ . The estimate we just described corresponds to
\begin{eqnarray}
\tilde{F}(x)=\frac{1}{n}\sum_{U_i\leq x}\Delta_i+(1-\Delta_i)\frac{x-U_i}{1-U_i}.\label{eqn:alternative estimator}
\end{eqnarray}
It is clear that this estimator can be generalized to deal with left and interval censoring by  distributing the weights associated to these censored random variables uniformly over their corresponding intervals: in the case of left censoring, uniformly on the interval $(0,U_i)$; and for interval censoring, uniformly on $(U_{i,l},U_{i,u})$, where $U_{i,l}$ and $U_{i,u}$ denote a lower an upper limit for the true value $F_0(X_i)$. The next proposition, whose proof is deferred to the supplementary material, establishes that our estimator is unbiased.

\begin{proposition}\label{Prop: Unbiased}
Under the null hypothesis, the estimator $\tilde{F}$, based on the data $\{U_i,\Delta_i\}_{i=1}^n$, is an unbiased estimator of the uniform distribution function $F_{\unif}$.
\end{proposition}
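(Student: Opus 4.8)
The plan is to compute $\E[\tilde{F}(x)]$ under the null hypothesis and show it equals $x$ for all $x \in (0,1)$, which is the distribution function $F_{\unif}$. Since $\tilde{F}$ is an average of $n$ i.i.d. terms (one per observation $i$), by linearity of expectation it suffices to show that a single summand has the right expectation; that is, writing $\phi_x(U,\Delta) = \Ind\{U \leq x\}\left(\Delta + (1-\Delta)\frac{x-U}{1-U}\right)$, I would prove $\E[\phi_x(U,\Delta)] = x$. Here $U = F_0(T) = F_0(\min\{X,C\}) = \min\{F_0(X), F_0(C)\}$ and $\Delta = \Ind\{T = X\} = \Ind\{F_0(X) \leq F_0(C)\}$ (using that $F_0$ is increasing, as established in Section \ref{sec:constructionNullDistribution}). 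Under $\mathbb{H}_0$, $V := F_0(X) \sim \mathcal{U}(0,1)$, and $W := F_0(C)$ is some random variable on $[0,1]$ independent of $V$; crucially, the final answer should not depend on the law of $W$ (the censoring distribution is a nuisance).

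First I would split according to the value of $\Delta$. On the event $\{\Delta = 1\} = \{V \leq W\}$, we have $U = V$, so the summand contributes $\Ind\{V \leq x, V \leq W\}$. On the event $\{\Delta = 0\} = \{V > W\}$, we have $U = W$, so the summand contributes $\Ind\{W \leq x, W < V\}\frac{x - W}{1 - W}$. Therefore
\begin{equation}
\E[\phi_x(U,\Delta)] = \Prob(V \leq x,\, V \leq W) + \E\left[\Ind\{W \leq x\}\,\Ind\{W < V\}\,\frac{x-W}{1-W}\right].
\end{equation}
Next I would condition on $W$ and use independence together with $V \sim \mathcal{U}(0,1)$: conditionally on $W = w$, $\Prob(V \leq x, V \leq w \given W = w) = \min\{x, w\}$ and $\Prob(W < V \given W = w) = 1 - w$. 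Hence the expression becomes
\begin{equation}
\E[\phi_x(U,\Delta)] = \E\big[\min\{x, W\}\big] + \E\left[\Ind\{W \leq x\}\,(1-W)\,\frac{x-W}{1-W}\right] = \E\big[\min\{x,W\}\big] + \E\big[\Ind\{W \leq x\}(x - W)\big].
\end{equation}
The factor $1 - W$ cancels cleanly (this cancellation is the whole point of weighting by $\frac{x-U_i}{1-U_i}$ rather than anything else, and it is why the censoring law drops out). Now I would observe that $\min\{x, W\} = \Ind\{W \leq x\} W + \Ind\{W > x\} x$, so $\min\{x,W\} + \Ind\{W \leq x\}(x - W) = \Ind\{W \leq x\} x + \Ind\{W > x\} x = x$ pointwise in $W$. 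Taking expectations gives $\E[\phi_x(U,\Delta)] = x$, and therefore $\E[\tilde{F}(x)] = x = F_{\unif}(x)$ for every $x \in (0,1)$, which is the claim.

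There is no serious obstacle here; the argument is a short conditioning computation. The one point that deserves a little care is the handling of ties and the precise form of the indicator events — i.e., whether $\Delta = \Ind\{V \leq W\}$ or $\Ind\{V < W\}$, and the corresponding strict/non-strict inequalities inside $\tilde{F}$. Since $F$ and $G$ (hence $V$ and $W$) are continuous, $\Prob(V = W) = 0$, so all these variants agree almost surely and the choice is immaterial; I would note this continuity assumption explicitly. One should also check the boundary case $U_i = 1$ (where the weight $\frac{x-U_i}{1-U_i}$ is undefined), but this occurs with probability zero for the same reason, so it does not affect the expectation. For left/interval censoring the identical computation goes through with $\min\{x,W\}$ replaced by the appropriate overlap term, though the proposition as stated only claims the right-censored case.
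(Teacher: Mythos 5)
Your proof is correct, and it rests on the same basic computation as the paper's: write $\E[\tilde F(x)]$ as the expectation of a single summand, split according to $\Delta$, and integrate out the censoring variable. The execution differs in two ways worth noting. First, the paper keeps the failure-time distribution $F$ general: it derives the joint law of $(U,\Delta)$ for arbitrary $F$ and $G$, obtains $\E(Z^u)=FF_0^{-1}(u)+(\text{correction})$ via integration by parts, and only then observes that the correction term vanishes when $F=F_0$. That general formula identifies the mean measure $\E(\tilde F)$ under alternatives, which the paper reuses elsewhere (e.g.\ in the statement of Proposition \ref{Prop: Consistency}). You specialize to the null from the outset, so that $V=F_0(X)$ is exactly uniform, and this buys a genuinely simpler finish: after conditioning on $W=F_0(C)$ the factor $1-W$ cancels against $\Prob(V>W\given W)=1-W$, and the pointwise identity $\min\{x,W\}+\ind\{W\le x\}(x-W)=x$ closes the argument with no integration by parts and no need to write down the joint law of $(U,\Delta)$ explicitly. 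Second, your remarks on ties and on the boundary event $\{U=1\}$ address measure-zero technicalities the paper passes over silently; continuity of $F_0$ and $G$ indeed disposes of them. The trade-off is that your route, as written, establishes only the null case claimed in the proposition, whereas the paper's longer computation also delivers the bias formula under alternatives.
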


\section{A kernel-based test}\label{sec:kernelBasedTest}

In defining a statistic for testing goodness of fit for censored data, we make use of kernel distribution embeddings: that is, embeddings
of probability measures to a reproducing kernel Hilbert space (RKHS) \cite[Chapter 4]{BerTho04}, \cite{SmoGreSonSch07}.
The distance between distribution embeddings is denoted the maximum mean discepancy (MMD) \cite{GreBorRasSchetal12}.

Goodness of fit can be tested using  the MMD between the model and sample:
this is the approach in \cite{BaringhausHenze88,SzeRiz05}, bearing in mind the equivalence
 of distance and kernel-based measures of divergence \cite{SejSriGreFuk13}.\footnote{We note that an alternative approach is to modify the RKHS using a Stein operator,
 yielding a class of functions with zero expectation under the model:
 \cite{Chwialkowski2016,LiuLeeJor2016}. Yet another approach would be to define a Fisher statistic
 between the model and sample in an RKHS
  \cite{HarBacMou08,BalLiYua17}.}
 This approach is inappropriate for our setting, given the censoring.
Instead, we will use as our statistic the MMD between a uniform
 distribution and the sample-based distribution introduced in eq. (\ref{eqn:alternative estimator}).

We begin with the reproducing kernel Hilbert space $(\mathcal{H},\langle\cdot,\cdot\rangle_{\mathcal{H}})$ of functions from $[0,1]\to\R$, with kernel $K:[0,1]\times[0,1]\to\R$ such that $K(\cdot,x)\in\mathcal{H}$ for all $x\in[0,1]$, and $f(x)=\langle f(\cdot),K(\cdot,x)\rangle_{\mathcal{H}}$ for each $f\in\mathcal{H}$ and $x\in[0,1]$. 
The mean embedding of a probability measure $P$ on the RKHS $\mathcal{H}$  is 
$\mu_P(\cdot)=\E_P(K(\cdot,X)),$
where the expectation is to be understood in terms of the Bochner integral.
In particular, $\mu_P$ is well-defined whenever $\E_P(\sqrt{K(X,X)})<\infty$, which is guaranteed under the following assumption:
\begin{assumption}\label{Assumption: bounded}
There  exists a constant $M \geq 1$ such that $|K(x,y)|\leq M$ for all $x,y \in [0,1]$.
\end{assumption}
For a sufficiently rich RKHS, mean embeddings are injective, and uniquely represent their
respective probability measures \cite{SriGreFukSchetal10}: such RKHS are called {\em characterisitc}.
The exponentiated quadratic kernel used in the present work satisfies this property.

We base our test on the maximum mean discrepancy between the uniform distribution $F_{\unif}$ and $FF_0^{-1}$, i.e., $MMD(F_{\unif},FF_0^{-1}):=\|\mu_{FF_0^{-1}}-\mu_{\unif}\|_{\mathcal{H}}$, where we use the estimator $\tilde F$ of equation \eqref{eqn:alternative estimator} for $FF_0^{-1}$.

We now proceed to study the asymptotics of  $MMD(F_{\unif},\tilde{F}(x))$.
As we will see, the main advantage of our goodness-of-fit estimator is that it can be expressed as the sum of independent random variables, and thus it allows us to use standard machinery from the theory of U-statistics
\cite{Serfling80} in deriving the asymptotic properties of our statistic.

From the definition of the kernel mean embedding and by the reproducing kernel property, it holds that
\begin{eqnarray}
MMD(F_{\unif},\tilde{F})^2=\|\mu_{\unif}-\mu_{\tilde{F}}\|^2_{\mathcal{H}}=\int_0^1\int_0^1K(x,y)(dx-d\tilde{F}(x))(dy-d\tilde{F}(y)).
\end{eqnarray}
Recall the definition of $\tilde{F}(x)=\sum_{i=1}^n h_{U_i,\Delta_i}(x)$, where $h_{U_i,\Delta_i}(x)=\ind_{\{U_i\leq x\}}\left(\Delta_i+(1-\Delta_i)\frac{x-U_i}{1-U_i}\right)$, then $d\tilde{F}(x)$ is given by
\begin{eqnarray}
d\tilde{F}(x)
&=&\frac{1}{n}\sum_{i=1}^n\ind_{\{U_i<x\}}\frac{1-\Delta_i}{1-U_i}dx+\Delta_i\delta_{U_i}(x),\label{eqn:diff}
\end{eqnarray}
where $\delta_U(x)$ denotes a delta measure on $U$. Based on the reproducing kernel $K$, we define the $U$-statistic kernel $J:([0,1]\times\{0,1\})^2\to\R$ as
\begin{eqnarray}
J((u,\delta),(u',\delta'))
&=&\int_0^1\int_0^1K(x,y)(dx-dh_{u,\delta}(x))(dy-dh_{u',\delta'}(y)).\label{eqn:V-stat kernel} 
\end{eqnarray}

Note that the U-statistic kernel $J:\mathbb{R}\times\{0,1\}\to\mathbb{R}$  depends neither on $F_0$ nor on the censoring distribution $G$; i.e., its implementation is distribution free, therefore it need only be computed once. By contrast, as we will see in Section \ref{sec:competing}, Log-rank and Pearson tests require us to evaluate and integrate the hazard function $\lambda_0=f_0/(1-F_0)$, which may be not trivial as $\lambda_0$ may not be easily computable. Moreover, since the test statistic depends explicitly on the function $\lambda_0$, it needs to be recomputed  for each different hypothesis.

\begin{proposition}\label{Prop: Consistency}
Denote by $Q_0$ and $Q_a$ the distribution of the pair $(U,\Delta)$ under the null and alternative hypotheses, respectively. Then, under assumption \ref{Assumption: bounded} and $\epsilon>0$, it holds 
\footnotesize{
\begin{eqnarray}
\Prob_{Q_0}\left(\left|MMD^2(\tilde{F},F_{\unif}))-\E_{Q_0}MMD^2(\tilde{F},F_{\unif}))\right|\geq \epsilon\right)
&\leq& \exp\left\{-\frac{\epsilon^2n}{32M}\right\}\nonumber
\end{eqnarray}
}
and $\E_{Q_0}MMD^2(\tilde{F},F_{\unif}))\leq 4M/n$. The same result holds when replacing $Q_0$ by $Q_a$, and $F_{\unif}$ by the mean measure $\E(\tilde F)$, respectively.
\end{proposition}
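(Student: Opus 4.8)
The plan is to exploit the V-statistic structure. Writing $h_i:=h_{U_i,\Delta_i}$, from \eqref{eqn:diff} we have $d\tilde F=\tfrac1n\sum_{i=1}^n dh_i$, and since $\mu_{\unif}=\tfrac1n\sum_{i=1}^n\mu_{\unif}$, the reproducing property gives $\mu_{\tilde F}-\mu_{\unif}=\tfrac1n\sum_{i=1}^n\xi_i$ with $\xi_i:=\mu_{h_i}-\mu_{\unif}\in\mathcal H$, hence
\[
MMD^2(\tilde F,F_{\unif})=\Big\|\tfrac1n\sum_{i=1}^n\xi_i\Big\|_{\mathcal H}^2=\tfrac1{n^2}\sum_{i,j}J\big((U_i,\Delta_i),(U_j,\Delta_j)\big).
\]
The first task is to observe that each $h_{u,\delta}$ is the CDF of a probability measure on $[0,1]$ --- a Dirac mass at $u$ when $\delta=1$, the uniform law on $(u,1)$ when $\delta=0$ --- so $\mu_{h_i}$ is a bona fide kernel mean embedding, well defined under Assumption \ref{Assumption: bounded}, and $\|\mu_P\|_{\mathcal H}^2=\E_{X,X'\sim P}K(X,X')\le M$ for every such $P$. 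Consequently $\|\xi_i\|_{\mathcal H}\le 2\sqrt M$ and likewise $\big\|\tfrac1n\sum_i\xi_i\big\|_{\mathcal H}=\|\mu_{\tilde F}-\mu_{\unif}\|_{\mathcal H}\le 2\sqrt M$.

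For the expectation bound, Proposition \ref{Prop: Unbiased} gives $\E_{Q_0}\tilde F=F_{\unif}$, so interchanging expectation with the Bochner integral (legitimate by boundedness) yields $\E_{Q_0}\mu_{h_i}=\mu_{\unif}$, i.e. $\E_{Q_0}\xi_i=0$. Since the $\xi_i$ are i.i.d., the cross terms drop out and $\E_{Q_0}MMD^2(\tilde F,F_{\unif})=\tfrac1{n^2}\sum_i\E_{Q_0}\|\xi_i\|_{\mathcal H}^2\le 4M/n$.

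For the concentration inequality I would apply McDiarmid's bounded-differences inequality to $\phi\big((U_1,\Delta_1),\dots,(U_n,\Delta_n)\big):=MMD^2(\tilde F,F_{\unif})$. Replacing the $i$-th pair changes only $\xi_i$, and writing $S:=\tfrac1n\sum_j\xi_j$ with $S'$ the perturbed average, $\|S-S'\|_{\mathcal H}=\tfrac1n\|\mu_{h_i}-\mu_{h_i'}\|_{\mathcal H}\le\tfrac{2}{n}\sqrt M$ while $\|S\|_{\mathcal H},\|S'\|_{\mathcal H}\le 2\sqrt M$, so that $|\phi-\phi'|=\big|\|S\|_{\mathcal H}^2-\|S'\|_{\mathcal H}^2\big|\le\|S-S'\|_{\mathcal H}\big(\|S\|_{\mathcal H}+\|S'\|_{\mathcal H}\big)=\mathcal O(M/n)$; feeding $n$ such squared increments into McDiarmid yields the stated exponential bound after tracking constants. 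Finally, under $Q_a$ the estimator is no longer unbiased for $F_{\unif}$, but $\E(\tilde F)$ is itself a CDF, so $\mu_{\E(\tilde F)}$ is a well-defined embedding of norm $\le\sqrt M$; redefining $\xi_i:=\mu_{h_i}-\mu_{\E(\tilde F)}$ restores $\E_{Q_a}\xi_i=0$ and every step above repeats verbatim with $F_{\unif}$ replaced by $\E(\tilde F)$. The one genuinely non-cosmetic point --- and the only place I anticipate needing care --- is the uniform bound $\|\mu_{h_{u,\delta}}\|_{\mathcal H}\le\sqrt M$ over all $(u,\delta)$: in the censored case $\delta=0$ the density $\tfrac1{1-u}\ind_{(u,1)}$ blows up as $u\uparrow 1$, yet the embedding norm stays controlled precisely because it remains a probability density, so Assumption \ref{Assumption: bounded} still applies; everything else is routine V-statistic bookkeeping.
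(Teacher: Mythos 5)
Your proposal is correct and follows essentially the same route as the paper's supplementary proof: both bound $\E\,MMD^2\leq 4M/n$ by centring each observation's contribution and killing the cross terms via independence, and both obtain the concentration bound from McDiarmid with a bounded difference of order $8M/n$; the only difference is that you phrase the argument in terms of RKHS mean embeddings $\xi_i=\mu_{h_i}-\mu_{\unif}$ while the paper works with the signed measures $\beta_i=\alpha_i-\alpha$ and justifies the vanishing cross-expectation by a simple-function approximation. Note that tracking the McDiarmid constants as you indicate yields $\exp\{-\epsilon^2 n/(32M^2)\}$, which matches the theorem actually proved in the supplement rather than the $32M$ appearing in the main-text statement.
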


\begin{proposition}\label{Prop: MMD U}
The $MMD(F_{\unif},\tilde{F})$ can be written as a $V$-statistic of degree 2 in the product space of survival-times and censoring-indicators. In particular
\begin{eqnarray}
MMD(F_{\unif},\tilde{F})=\frac{1}{n^2}\sum_{i=1}^n\sum_{j=1}^nJ((U_i,\Delta_i),(U_j,\Delta_j)),\label{eqn: V-statistic}
\end{eqnarray}

where $J$ is the $U$-statistic kernel defined in equation \eqref{eqn:V-stat kernel}. Moreover, under the null hypothesis, an unbiased estimate of $MMD(F_{\unif},FF_0^{-1})$ corresponds to the $U$-statistic
\begin{eqnarray}
MMD_U(F_{\unif},\tilde{F})={{n}\choose{2}}^{-1}\sum_{i< j}J((U_i,\Delta_i),(U_j,\Delta_j)).\label{eqn: Ustatsic}
\end{eqnarray}
\end{proposition}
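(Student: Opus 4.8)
Both statements follow from linearity. The $V$-statistic representation comes from expanding the integral in eq.~(6) after writing the centred measure $dx - d\tilde F(x)$ as an average over observations; the unbiasedness comes from linearity of the mean-embedding map together with the unbiasedness of $\tilde F$ established in Proposition~\ref{Prop: Unbiased}. The only real work is bookkeeping of the signed measures and routine Fubini/Bochner justifications, all supplied by Assumption~\ref{Assumption: bounded}.

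\textbf{Step 1 ($V$-statistic form).} Starting from $MMD^2(F_{\unif},\tilde F) = \int_0^1\int_0^1 K(x,y)\,(dx - d\tilde F(x))(dy - d\tilde F(y))$ (reading the proposition's ``$MMD$'' as this squared quantity, consistent with eq.~(6)), I would write $dx = \frac{1}{n}\sum_{i=1}^n dx$ and use eq.~(5), i.e.\ $d\tilde F(x) = \frac{1}{n}\sum_{i=1}^n dh_{U_i,\Delta_i}(x)$ with $dh_{u,\delta}(x) = \delta\,\delta_u(x) + \ind_{\{u<x\}}\frac{1-\delta}{1-u}\,dx$, to obtain
\[
dx - d\tilde F(x) = \frac{1}{n}\sum_{i=1}^n\big(dx - dh_{U_i,\Delta_i}(x)\big).
\]
Substituting this and the corresponding identity in $y$, and pulling the finite double sum out of the integral, gives $MMD^2(F_{\unif},\tilde F) = \frac{1}{n^2}\sum_{i,j}\int_0^1\int_0^1 K(x,y)\,(dx - dh_{U_i,\Delta_i}(x))(dy - dh_{U_j,\Delta_j}(y)) = \frac{1}{n^2}\sum_{i,j} J((U_i,\Delta_i),(U_j,\Delta_j))$ by the definition~(7) of $J$. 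Interchanging the (finite) sum and the integral is licit because $|K|\le M$ and each signed measure $dx - dh_{u,\delta}$ has total variation at most $2$, so each summand is finite (of absolute value at most $4M$). This exhibits $MMD^2(F_{\unif},\tilde F)$ as a degree-$2$ $V$-statistic on $[0,1]\times\{0,1\}$, the $i$-th ``observation'' being the pair $(U_i,\Delta_i)$.

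\textbf{Step 2 (unbiased $U$-statistic under the null).} I would recast $J$ via kernel embeddings: writing $\mu_{h_{u,\delta}} := \int_0^1 K(\cdot,x)\,dh_{u,\delta}(x)$ (well defined since $dh_{u,\delta}$ is a.s.\ a probability measure on $[0,1]$ and Assumption~\ref{Assumption: bounded} holds), the reproducing property gives $J((u,\delta),(u',\delta')) = \langle \mu_{\unif} - \mu_{h_{u,\delta}},\, \mu_{\unif} - \mu_{h_{u',\delta'}}\rangle_{\mathcal H}$, which is symmetric since $K$ is. Because $(U_1,\Delta_1),\dots,(U_n,\Delta_n)$ are i.i.d.\ under $Q_0$, each of the $\binom{n}{2}$ summands $J((U_i,\Delta_i),(U_j,\Delta_j))$ of $MMD_U$ has the same expectation, so $\E_{Q_0}[MMD_U(F_{\unif},\tilde F)] = \E_{Q_0}[J((U_1,\Delta_1),(U_2,\Delta_2))]$. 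Using independence of the two pairs and linearity of the Bochner integral,
\[
\E_{Q_0}\big[J((U_1,\Delta_1),(U_2,\Delta_2))\big] = \big\langle \mu_{\unif} - \E_{Q_0}\mu_{h_{U_1,\Delta_1}},\, \mu_{\unif} - \E_{Q_0}\mu_{h_{U_2,\Delta_2}}\big\rangle_{\mathcal H},
\]
and $\E_{Q_0}\mu_{h_{U_1,\Delta_1}} = \mu_{\E_{Q_0}[h_{U_1,\Delta_1}]} = \mu_{\unif}$, the last step using Proposition~\ref{Prop: Unbiased} in single-term form, $\E_{Q_0}[h_{U_1,\Delta_1}] = F_{\unif}$. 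Hence the expectation equals $\langle 0,0\rangle_{\mathcal H} = 0$. Since $FF_0^{-1} = F_{\unif}$ under $\mathbb{H}_0$, we also have $MMD^2(F_{\unif},FF_0^{-1}) = 0$, so $MMD_U$ is unbiased for $MMD^2(F_{\unif},FF_0^{-1})$ under the null.

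\textbf{Main obstacle.} There is no substantive obstacle: the proof is essentially two uses of linearity plus Proposition~\ref{Prop: Unbiased}. The care points are purely technical — keeping the Dirac part $\delta\,\delta_u$ and the absolutely-continuous part of $dh_{u,\delta}$ straight when decomposing $d\tilde F$ per observation, and justifying the two exchanges of expectation/summation with integration, which Assumption~\ref{Assumption: bounded} handles (finite total variation of the signed measures against a bounded $K$; validity of the mean embeddings as Bochner integrals). I would also make explicit the purely notational identification of the proposition's left-hand side ``$MMD(F_{\unif},\tilde F)$'' with the squared MMD of eq.~(6).
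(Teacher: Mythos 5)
Your proof is correct. For the $V$-statistic identity your argument is exactly the paper's: expand $dx-d\tilde F(x)=\frac{1}{n}\sum_i(dx-dh_{U_i,\Delta_i}(x))$ using eq.~\eqref{eqn:diff} and pull the finite double sum out of the integral, which the paper dispatches in one line. For unbiasedness you take a slightly different route: the paper simply cites the degeneracy property $\E_{Q_0}[J((u,\delta),(U,\Delta))]=0$ established in Lemma~\ref{Lemma: nMMD(x,tilde F)} (whose proof there is a direct integral computation against the explicit joint law of $(U,\Delta)$, i.e.\ against $dG$), whereas you give a self-contained argument by writing $J$ as an inner product of centred mean embeddings, factorizing the expectation over the two independent arguments, and invoking Proposition~\ref{Prop: Unbiased} to get $\E_{Q_0}\mu_{h_{U,\Delta}}=\mu_{\unif}$. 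The two arguments rest on the same underlying fact ($\E_{Q_0}h_{U,\Delta}=F_{\unif}$), but your Hilbert-space factorization avoids redoing the measure-theoretic bookkeeping of the lemma and makes the dependence on Proposition~\ref{Prop: Unbiased} explicit; the paper's version buys the stronger pointwise (first-order) degeneracy statement, which it needs anyway for the limit distribution in Lemma~\ref{Lemma: nMMD(x,tilde F)}. Your observations that the proposition's ``$MMD$'' should be read as $MMD^2$ and that unbiasedness here is the statement that the $U$-statistic has mean zero, matching $MMD^2(F_{\unif},FF_0^{-1})=0$ under the null, are both correct and worth making explicit.
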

\begin{lemma}\label{Lemma: nMMD(x,tilde F)}
Under the null hypothesis and under assumption \ref{Assumption: bounded}, it holds
\begin{eqnarray}
nMMD(F_{\unif},\tilde{F})\overset{\mathcal{D}}{\to}Y+\E_{Q_0}(J(U,\Delta),(U,\Delta)),
\end{eqnarray}
where $Y=\sum_{j=1}^\infty\lambda_j(\xi_j^2-1)$ and $\xi_j$ are independent normal random variables with zero mean and unit variance. Moreover, $\lambda_j$ are the eigenvalues of the linear transformation $T:([0,1]\times\{0,1\},Q_0)\to  L_2([0,1]\times\{0,1\},Q_0)$ given by 
$$(Tg)(u,\delta)=\E_{Q_0}(J((U,\Delta),(u,\delta))g(U,\Delta)),$$ 
where $Q_0$ denotes the joint distribution of the pair $(U,\Delta)$ under the null hypothesis.
\end{lemma}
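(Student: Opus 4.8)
The plan is to recognise $nMMD(F_{\unif},\tilde F)$ as $n$ times a degenerate $V$-statistic of degree two, peel off its diagonal, and then invoke the classical limit theorem for degenerate $U$-statistics \cite{Serfling80}. Throughout write $Z_i=(U_i,\Delta_i)$ and, for a generic value $z=(u,\delta)$, let $\nu_z$ denote the signed measure $dx-dh_z(x)$ on $[0,1]$.

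First I would record the structural properties of the kernel $J$ of \eqref{eqn:V-stat kernel}. Since each $dh_z$ is a probability measure on $[0,1]$, $\nu_z$ has total mass $0$ and total variation at most $2$, so Assumption \ref{Assumption: bounded} yields the pointwise bound $|J(z,z')|\leq 4M$; in particular $J$ is bounded and $\E_{Q_0}[J(Z_1,Z_2)^2]<\infty$. Also $J(z,z')=\innerH{\int_0^1 K(\cdot,x)\,d\nu_z(x)}{\int_0^1 K(\cdot,y)\,d\nu_{z'}(y)}$, so $J$ is symmetric and positive semi-definite. Finally, Proposition \ref{Prop: Unbiased} gives $\E_{Q_0}[\tilde F(x)]=x$ under the null, hence $\E_{Q_0}[dh_Z(y)]=dy$, so that
\begin{eqnarray}
\E_{Q_0}\left[J(z,Z)\right] &=& \int_0^1\int_0^1 K(x,y)\,d\nu_z(x)\left(dy-\E_{Q_0}[dh_Z(y)]\right) \;=\; 0 \nonumber
\end{eqnarray}
for every $z$; i.e.\ $J$ is completely degenerate under $Q_0$.

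Next, using Proposition \ref{Prop: MMD U}, I would split off the diagonal,
\begin{eqnarray}
nMMD(F_{\unif},\tilde F)=\frac1n\sum_{i,j=1}^n J(Z_i,Z_j)=\frac{n-1}{n}\Big(n\,MMD_U(F_{\unif},\tilde F)\Big)+\frac1n\sum_{i=1}^n J(Z_i,Z_i).\nonumber
\end{eqnarray}
As $J$ is bounded, the strong law of large numbers gives $\frac1n\sum_i J(Z_i,Z_i)\to\E_{Q_0}[J((U,\Delta),(U,\Delta))]$ and $\frac{n-1}{n}\to 1$, so by Slutsky's theorem it remains to show $n\,MMD_U(F_{\unif},\tilde F)\overset{\mathcal{D}}{\to}Y$. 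For this I would apply the spectral theorem to the Hilbert--Schmidt operator $T$ on $L_2([0,1]\times\{0,1\},Q_0)$: it has nonnegative eigenvalues $\lambda_1\geq\lambda_2\geq\cdots$ with $\sum_k\lambda_k^2=\E_{Q_0}[J(Z_1,Z_2)^2]<\infty$ and orthonormal eigenfunctions $\{\phi_k\}$ which, by degeneracy, satisfy $\E_{Q_0}[\phi_k(Z)]=0$, with $J=\sum_k\lambda_k\,\phi_k\otimes\phi_k$ in $L_2(Q_0\otimes Q_0)$. Substituting this expansion into $MMD_U$, using $\sum_{i<j}\phi_k(Z_i)\phi_k(Z_j)=\tfrac12\big[(\sum_i\phi_k(Z_i))^2-\sum_i\phi_k(Z_i)^2\big]$, the multivariate CLT (the $\phi_k$ are orthonormal with mean zero) and the LLN, one obtains for each fixed $K$ that $n\binom{n}{2}^{-1}\sum_{i<j}\sum_{k\le K}\lambda_k\phi_k(Z_i)\phi_k(Z_j)\overset{\mathcal{D}}{\to}\sum_{k\le K}\lambda_k(\xi_k^2-1)$ with $(\xi_k)$ i.i.d.\ standard normal; a truncation argument bounding the remainder in $L_2$ both in the limit (via $\sum_{k>K}\lambda_k^2\to0$) and uniformly in $n$ upgrades this to $n\,MMD_U\overset{\mathcal{D}}{\to}\sum_k\lambda_k(\xi_k^2-1)=Y$, which finishes the proof.

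The main obstacle is this last step. Because the map $z\mapsto h_z$ has an atomic component (when $\delta=1$) and an absolutely continuous component with unbounded density (when $\delta=0$), $J$ is not continuous, so Mercer's theorem does not apply pointwise and the eigen-expansion together with all the tail estimates must be carried out in $L_2(Q_0\otimes Q_0)$ rather than pointwise. This is precisely the hypothesis set of the classical degenerate $U$-statistic central limit theorem, so in practice one verifies only that $J$ is symmetric, completely degenerate under $Q_0$, and square-integrable — all established above — and then quotes that theorem; the remaining bookkeeping (the diagonal LLN and Slutsky's theorem) is routine.
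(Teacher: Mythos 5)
Your proposal is correct and follows essentially the same route as the paper: verify that $J$ is degenerate and square-integrable under $Q_0$, handle the diagonal term by the strong law of large numbers, and invoke the classical limit theorem for degenerate $U$-statistics from \cite{Serfling80}. The only (harmless) difference is that you obtain degeneracy by citing Proposition \ref{Prop: Unbiased} (i.e.\ $\E_{Q_0}[dh_{U,\Delta}(y)]=dy$) rather than redoing the explicit computation with the censoring distribution $G$, and you sketch the spectral-decomposition internals of the cited theorem, which the paper simply quotes.
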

\begin{corollary}Under the null hypothesis
$nMMD_U(F_{\unif},\tilde{F})\overset{\mathcal{D}}{\to}Y$,
with $Y$ defined as in Lemma \ref{Lemma: nMMD(x,tilde F)}. Under the alternative, when $ E_{Q_a}(\tilde{F})\neq F_{\unif}$, $\sqrt{n}MMD_U(F_{\unif},\tilde{F})$ is asymptotically normal.
\end{corollary}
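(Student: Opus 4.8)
The plan is to read $MMD_U(F_{\unif},\tilde F)$ as a $U$-statistic of degree $2$ with the symmetric kernel $J$ of \eqref{eqn:V-stat kernel}, and to split into two regimes according to whether this $U$-statistic is degenerate: the null will be the degenerate case and the alternative the non-degenerate one. First I would record that $J$ is uniformly bounded: each signed measure $dx-dh_{u,\delta}(x)$ has total-variation norm at most $2$, uniformly in $(u,\delta)$, so Assumption~\ref{Assumption: bounded} gives $|J|\le 4M$ everywhere; hence $J$ is square-integrable under both $Q_0$ and $Q_a$ and $\E_{Q_0}|J((U,\Delta),(U,\Delta))|<\infty$ (and likewise under $Q_a$), which is the only integrability the $U$-statistic limit theorems need.

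For the null case, the crucial observation is degeneracy. Using the bilinearity of $J$ in its two signed measures and the independence of its arguments, $\E_{Q_0}[J((u,\delta),(U,\Delta))]=\int_0^1\int_0^1 K(x,y)(dx-dh_{u,\delta}(x))(dy-\E_{Q_0}[dh_{U,\Delta}(y)])$; by Proposition~\ref{Prop: Unbiased}, $\E_{Q_0}\tilde F=F_{\unif}$, equivalently $\E_{Q_0}[dh_{U,\Delta}(y)]=dy$, so the $y$-integral vanishes and the first projection is identically zero. Rather than re-prove the degenerate $U$-statistic limit theorem, I would borrow it from Lemma~\ref{Lemma: nMMD(x,tilde F)}: splitting the $V$-statistic of Proposition~\ref{Prop: MMD U} into off-diagonal and diagonal parts yields the exact identity
\[
nMMD_U(F_{\unif},\tilde F)=\frac{n}{n-1}\left(nMMD(F_{\unif},\tilde F)-\frac1n\sum_{i=1}^n J((U_i,\Delta_i),(U_i,\Delta_i))\right).
\]
The diagonal average converges in probability to $\E_{Q_0}[J((U,\Delta),(U,\Delta))]$ by the weak law of large numbers, the term $nMMD(F_{\unif},\tilde F)$ converges in distribution to $Y+\E_{Q_0}[J((U,\Delta),(U,\Delta))]$ by Lemma~\ref{Lemma: nMMD(x,tilde F)}, and $n/(n-1)\to1$; Slutsky's theorem then gives $nMMD_U(F_{\unif},\tilde F)\overset{\mathcal D}{\to}Y$ with the eigenvalues $\{\lambda_j\}$ of the same operator $T$.

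For the alternative case, set $\bar F_a:=\E_{Q_a}\tilde F$ and $\theta_a:=\E_{Q_a}[J((U_1,\Delta_1),(U_2,\Delta_2))]$. The same bilinearity computation gives $\theta_a=\normH{\mu_{\unif}-\mu_{\bar F_a}}^2$, which is strictly positive precisely when $\bar F_a\ne F_{\unif}$, since the exponentiated quadratic kernel is characteristic; so the hypothesis $\E_{Q_a}(\tilde F)\ne F_{\unif}$ places us outside the degenerate regime. The first-order projection equals, up to the additive constant $\theta_a$,
\[
\E_{Q_a}[J((u,\delta),(U,\Delta))]-\theta_a=\innerH{\mu_{\bar F_a}-\mu_{h_{u,\delta}}}{\mu_{\unif}-\mu_{\bar F_a}},
\]
and provided its $Q_a$-variance $\sigma_1^2$ is positive, Hoeffding's central limit theorem for $U$-statistics (see e.g.\ \cite{Serfling80}) yields $\sqrt n(MMD_U(F_{\unif},\tilde F)-\theta_a)\overset{\mathcal D}{\to}\mathcal N(0,4\sigma_1^2)$, i.e.\ $\sqrt n\,MMD_U(F_{\unif},\tilde F)$ is asymptotically normal, centred at $\sqrt n\,\theta_a$.

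The bilinearity manipulations and the uniform bound on $J$ are routine. The two steps that need care are: (i) the passage from the $V$-statistic limit of Lemma~\ref{Lemma: nMMD(x,tilde F)} to the $U$-statistic limit, which requires controlling the diagonal terms by a law of large numbers and chaining two Slutsky arguments; and (ii) checking that the first-order projection is non-degenerate under the alternative, i.e.\ $\sigma_1^2>0$ --- this fails only when $\mu_{h_{U,\Delta}}$ lies $Q_a$-almost surely in a fixed hyperplane orthogonal to $\mu_{\unif}-\mu_{\bar F_a}$, which does not occur for the $Q_a$ of interest, but it is exactly the condition that distinguishes the $n^{-1/2}$ scaling here from the $n^{-1}$ scaling under the null, so I expect it to be the main point to pin down.
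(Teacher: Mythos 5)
Your argument is correct and is essentially the intended one: the paper gives no explicit proof of this corollary (it is absent from the supplementary material), and the standard route you take --- degeneracy of $J$ under $Q_0$ plus the exact identity relating the $U$- and $V$-statistics, a law of large numbers for the diagonal, and Slutsky to transfer the limit of Lemma \ref{Lemma: nMMD(x,tilde F)}, then Hoeffding's CLT for non-degenerate $U$-statistics under $Q_a$ --- is exactly what the citation of \cite{Serfling80} is meant to cover. The one point you rightly flag, positivity of the first-projection variance $\sigma_1^2$ under the alternative (without which the $\sqrt{n}$-normalised statistic would collapse to a constant), is also left implicit by the paper, so your treatment is if anything slightly more careful; note also that, as you observe, the clean statement is that $\sqrt{n}\bigl(MMD_U(F_{\unif},\tilde F)-\theta_a\bigr)$ is asymptotically normal, the uncentred version diverging at rate $\sqrt{n}\,\theta_a$.
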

The proof of our results are given in the supplementary material. 

Although we have an expression for the asymptotic distribution of our test statistic, the parameters of this distribution are hard to compute. Instead, we propose to use the wild bootstrap to approximate the rejection regions of our test statistic under the null hypothesis.

Specifically,  we can re-sample from the distribution of our statistic $MMD(F_{\unif},\tilde{F}_n)$ by repeatedly sampling
\begin{eqnarray}
\frac{1}{n^2}\sum_{i=1}^n\sum_{j=1}^n\mathcal{W}_i\mathcal{W}_jJ((U_i,\Delta_i),(U_j,\Delta_j)),\label{eqn: Wild boot version}
\end{eqnarray} 
where $\{\mathcal{W}_i\}_{i=1}^n$ is a sequence of independent random variables with zero mean (to preserve the degeneracy property) and variance one. It was proved in \cite{dehling1994random} that \eqref{eqn: Wild boot version} has the same limit distribution as our test-statistic $MMD(F_{\unif},\tilde{F}_n)$. 

\begin{algorithm}
  \KwIn{data $\{U_i,\Delta_i\}_{i=1}^n$}
  Consider $\mathcal{W}_1,\ldots,\mathcal{W}_n$ a random sample with $\E(W_i)=0 $ and $\Var(\mathcal{W}_i^2)=1$\\
Return $MMD^B_k=\frac{1}{n^2}\sum_{i=1}^n\sum_{j=1}^n\mathcal{W}_i\mathcal{W}_jJ((U_i,\Delta_i),(U_j,\Delta_j))$
\caption{Wild bootstrap.}\label{Algorithm}
\end{algorithm}

Given a nominal $\alpha$ value, the rejection region can be approximated by using a bootstrap sample $\{MMD^B_k\}_{k=1}^N$, as shown in Algorithm \ref{Algorithm}. 

\section{Competing approaches}\label{sec:competing}
\subsection{Pearson-type goodness-of-fit} 
This approach, due to Akritas  \cite{akritas1988},  considers the partition of the sample space of the random variables $(T_i,\Delta_i)$, for $\Delta\in\{0,1\}$, into $k$ cells, and studies the distribution of the $k$-dimensional vector of observed minus expected frequencies. Let $A_j=[a_{j-1},a_j)$ with $j=1,\ldots,k$ and  $0=a_0<a_1<\ldots<a_{k-1}<a_k=\infty$. We define the observed frequencies as
$N_{1,j}=\sum_{i=1}^n\ind\{T_i\in A_j,\Delta_i=1\}$ and the expected frequencies as $p_{1,j}=\int_{A_j}(1-G)dF_0$.  Observe that the expected frequencies depend on the unknown censoring distribution $G$, thus they are estimated by replacing the distribution $G$ by the estimate $\hat{G}=1-(1-\hat{H})/(1-F_0)$, where $\hat{H}$ is the empirical distribution of the minimum $T=\min\{X,C\}$. Estimators for the expected frequencies are then given by $\hat{p}_{1,j}=\int_{A_j}(1-\hat{G})dF_0=\int_{A_j}(1-\hat{H})d\Lambda_{0}$.
Under the null hypothesis $\mathbb{H}_0: F=F_0$, where $F_0$ is a specified continuous function, the Pearson-type statistic corresponds to
$\sum_{j=1}^k(N_{1j}-n\hat{p}_{1j})^2/(n\hat{p}_{1j})$, whose asymptotic distribution is $\chi^2_k$. 

\subsubsection{Implementation} For the Pearson test, ``Pearson" in the tables, we consider $k=4$ cells, each of them accumulating $0.25$ probability under the null hypothesis. The reason for  using $k=4$ cells is due to the trade-off between the number of data points required and the number of cells used: as the number of cells increases more data is needed to achieve the correct level of the test (i.e. Type-I error). We found that for $k=4$, the test is competitive even though the Type-I error is a little bit overestimated for small sample sizes ($30,50$ data points).

\subsection{Log-rank test and Weighted Log-rank tests}
Arguably, the Log-rank test is the most commonly-used statistical test for comparing survival curves. The test is performed by comparing differences of area as $Z=\int_0^\infty Y(t)d(\Lambda(t)-\Lambda_0(t))$,
where $Y(t)$ denotes the so-called risk function, given by $Y(t)=\sum_{i=1}^n\ind\{T_i\geq t\}$. The true cumulative hazard function, $\Lambda$, is estimated by the non-parametric Nelson-Aalen estimator. From the theory of counting processes \cite{FlemingBook}, $Z$ has an asymptotically normal distribution under appropriate scaling.

Weighted Log-rank tests generalize the classical Log-rank test, by considering an extra weight function $W$ when comparing areas, giving the more general statistic $Z(W)=\int_0^\infty W(t)Y(t)d(\Lambda(t)-\Lambda_0(t))$.

\subsubsection{Implementation} We consider weighted Log-rank tests, ``LR1" and ``LR2" in the tables, with weight functions $W_1(t)=1$ and $W_2(t)=Y(t)$, respectively ($Y(t)$ risk function). Observe that $Z(W_1)$ is the classical Log-rank test. The second statistic $Z(W_2)$ corresponds to the so-called Gehan-Breslow test \cite{aalenbook}.

\subsection{Combined Log-rank tests}
These tests   automate the procedure of choosing weight functions to create a more flexible test with broader power.
We describe here the approach of \cite{ditzhaus2018more}. A vector of weighted Log-rank tests $\boldsymbol{Z}=(Z(\omega_1\circ\hat{F}_n),\ldots,Z(\omega_k\circ\hat{F}_n))$ is defined, with weight functions $\omega_1,\ldots,\omega_k$.  The $\omega_i'$s are continuous and of bounded variation, and $\hat{F}_n$ is the Kaplan-Meier estimator. A combined-Log-rank test-statistic is computed as $S_n=\boldsymbol{Z}^\intercal\hat\Sigma^{+}\boldsymbol{Z}$, where $\hat{\Sigma}$ is the empirical covariance matrix of $\boldsymbol{Z}$, and $\Sigma^+$ represents the pseudo-inverse of $\Sigma$. Under some regularity conditions it is shown that $S_n\to\chi_k^2$ as $n$ grows to infinity.

\subsubsection{Implementation} We consider four different functions for the combined Log-rank test, denoted as ``WLR" in the tables. These functions correspond to i) the constant weight function, $\omega_1(t)=1$, which weights all time points equally, ii) an early weight function, $\omega_2(t)=t(1-t)^3$, which gives more weights to departures of the null at early times, iii) a central weight function, $\omega_3(t)=(1-t)t$, giving more weight at central times, iv) and a crossing weight function $\omega_4(t)=1-2t$, which has a sign switch at $t=1/2$. 

\subsection{Kernel test}
In the approach of \cite{BagIoaKal13},  a modified Kaplan-Meier estimate of the density was used for goodness of fit testing.
As noted already in \cite{ditzhaus2018more}, however, the procedure suffers from low test power at reasonable sample sizes,
which we have also confirmed independently (see Appendix, Section 2). For this reason, our experiments in the next section
will focus on the chi-squared and log-rank approaches.

\subsubsection{Implementation} For the implementation of this test, we use the code made available by the authors.

\section{Experiments}\label{sec:experiments}

For our MMD-based test, we choose the kernel to be $K_l(x,y)=\exp\{(x-y)^2/l^2\}$. For the length-scale parameter, we either use  $l=1$ (for simple settings),  or (for more complex settings) $\hat{l}_n$ computed by taking the median of the pair-wise differences of  survival times, without making a distinction between censored or uncensored time points. For the estimation of the rejection regions, we implement wild bootstrap as described in Algorithm \ref{Algorithm}. In particular, we consider three different sets of random variables $\{\mathcal{W}_i\}_{i=1}^n$: i)$\mathcal{W}_i\overset{i.i.d.}{\sim}N(0,1)$, ii) $\mathcal{W}_i\overset{ }{\sim}Multinomial(n,1/n,\ldots,1/n)-1$ and iii) $\mathcal{W}_i\overset{i.i.d.}{\sim}Rademacher$. Each test is denoted by ``MW1", ``MW2" and ``MW3" in the tables, respectively.

In all our experiments, we consider the null hypothesis to be $H_0:\lambda(t)=1$, or alternatively $H_0:S(t)=e^{-t}$. 
Then, as the data for different experiments is very similar under the null (we just vary the censoring distribution), we only show results in the main body for the estimated Type-I error for our first experiment: parallel hazards. These results are presented in Tables \ref{Table 1:null30}. (See sections 3,4 and 5 of the supplementary material for the remaining tables, including more sample sizes and censoring parameters.)

\begin{multicols}{2}
\begin{table}[H]
\centering
\resizebox{1\columnwidth}{!}{
\begin{tabular}{lrrrrrrr}
\hline
  \multicolumn{8}{c}{Type-I error; Censoring $30\%$; Fixed length-scale 1}\\
  \hline
 $\boldsymbol{\alpha}$& MW1 & MW2 & MW3 & Pearson & LR1 & LR2 &WLR\\ 
  \hline
  \multicolumn{8}{l}{Sample size $n=30$} \\
10 \% & 10.10 & 11.70 & 10.20 &\color{red}{14.50} & 11.15 & 11.10 &\color{red}{19.65}\\ 
  5 \% & 5.10 & 6.55 &4.85& \color{red}{8.70} & 5.90 & 6.60 &\color{red}{13.65}\\ 
  1 \% &1.15 & 2.05 & 1.10 &\color{red}{3.05} & 1.60 & 1.30 &\color{red}{6.30}\\ 
  \hline
  \multicolumn{8}{l}{Sample size $n=200$} \\
  10 \% & 9.35 & 9.55 & 9.45 & 9.80 & 10.05 & 9.80 & 11.40 \\ 
  5 \% & 4.90 & 4.85 & 4.75 & 5.65 & 5.10 & 5.10 & 6.85 \\ 
  1 \% & 1.15 & 1.35 & 1.20 & 1.30 & 1.25 & 1.30 & 1.80 \\ 
  \hline
\end{tabular}
}
\caption{{Estimated Type-I error, $\alpha\in\{10\%,5\%,1\%\}$. Sample size $n=30,200$. Censoring percentage $30\%$. Fixed-length-scale 1}}\label{Table 1:null30}
\end{table}

\begin{table}[H]
\centering
\resizebox{\columnwidth}{!}{
\begin{tabular}{lrrrrrrr}
\hline
  \multicolumn{8}{c}{Type-I error; Censoring $50\%$; Fixed length-scale 1}\\
  \hline
 $\boldsymbol{\alpha}$& MW1 & MW2 & MW3 & Pearson & LR1 & LR2 &WLR\\ 
  \hline
  \multicolumn{8}{l}{Sample Size $n=30$} \\
  
  10 \% & 10.50 & 11.80 & 10.50 & \color{red}{15.30} & 10.50 & 10.25 &\color{red}{19.25}\\ 
  5 \% & 4.45 & 5.90 & 4.50 & \color{red}{10.15} & 5.75 & 5.70 &\color{red}{14.05}\\ 
  1 \% & 1.05 & 1.65 & 1.05 &\color{red}{5.30} & 1.30 & 1.65 &\color{red}{6.35} \\\hline
  \multicolumn{8}{l}{Sample Size $n=200$} \\
  
10 \% & 10.05 & 10.10 & 9.80 & 10.05 & 11.10 & 11.20 & 12.65 \\ 
  5 \% & 5.05 & 5.10 & 5.00 & 5.50 & 4.75 & 5.50 & 7.30 \\ 
  1 \% & 1.00 & 1.15 & 0.90 & 1.55 & 1.20 & 1.10 & 2.40 \\ 
  \hline
\end{tabular}
}
\caption{{Estimated Type-I error, $\alpha\in\{10\%,5\%,1\%\}$. Sample size $n=30,200$. Censoring percentage $50\%$. Fixed-length-scale 1}}\label{Table:null50}
\end{table}
\end{multicols}

From tables \ref{Table 1:null30} and \ref{Table:null50}, we can observe that all the tests achieve the correct level for larger sample sizes, i.e. $n=200$. For small sample sizes, i.e. $n=30$, the Pearson test and the combined Log-rank test have a wrong level (shown in red in the tables); thus, the measured performance of these tests under the alternative is not meaningful. This may be due to incomplete convergence of the test statistics, since the associated thresholds are based on asymptotic results. In the case of the Pearson test, the test-statistic uses a plug-in estimator $\hat{G}$ of the censoring distribution $G$, making the testing procedure more complex. Similarly, for the combined Log-rank tests  (WLR) the more weight functions we consider, the more complex is our testing procedure and thus, the more data we need.

\subsection{Proportional hazard functions}
In this experiment, we consider testing the family of alternatives $\lambda_{\theta}(t)=\theta$ where $\theta\in\{0.5+0.05k;k=1,\ldots,30\}$ against the null hypothesis $H_0:\lambda(t)=1$. We consider the censoring distribution $G(t)=1-e^{-\gamma t}$, where $\gamma$ is chosen in such a way it generates a censoring percentage of $30\%$ and $50\%$ for each combination of alternative. The level of the test is fixed at $\alpha=0.05$, and we consider a sample size $n\in\{30,50,100,200\}$. Each experiment considers $N=2000$ independent repetitions.

It is a well-known fact that the classical Log-rank test is optimal, in the sense that it is the most powerful against proportional hazards alternatives. The aim of our first experiment is thus to compare our test  performance to a test that we know is optimal. Results are shown in Figure \ref{Fig 1}.

\begin{figure}[H]
\begin{center}
 \begin{tabular}{@{}ccc@{}}
\includegraphics[width=.75\textwidth]{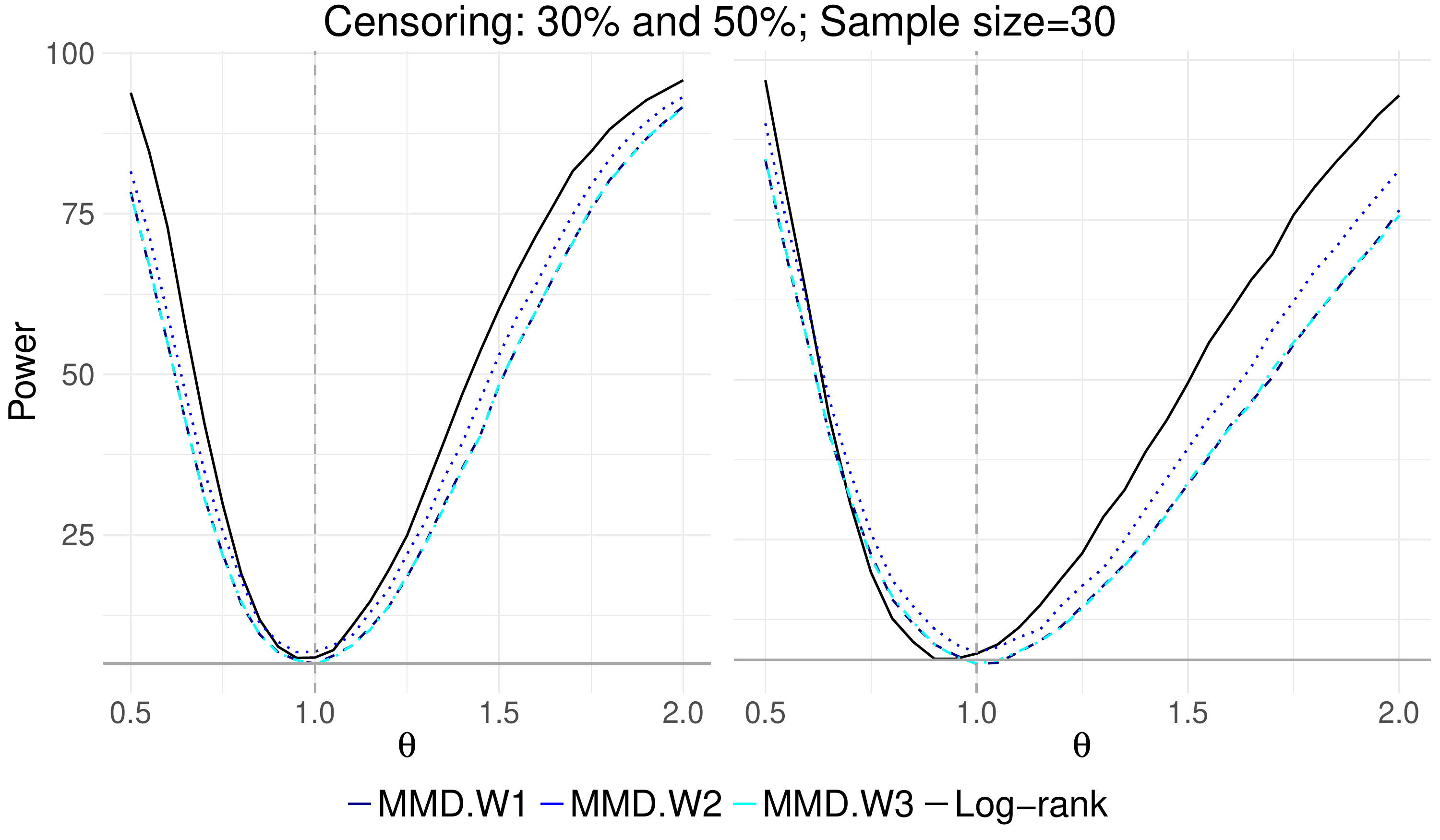}\\
\includegraphics[width=.75\textwidth]{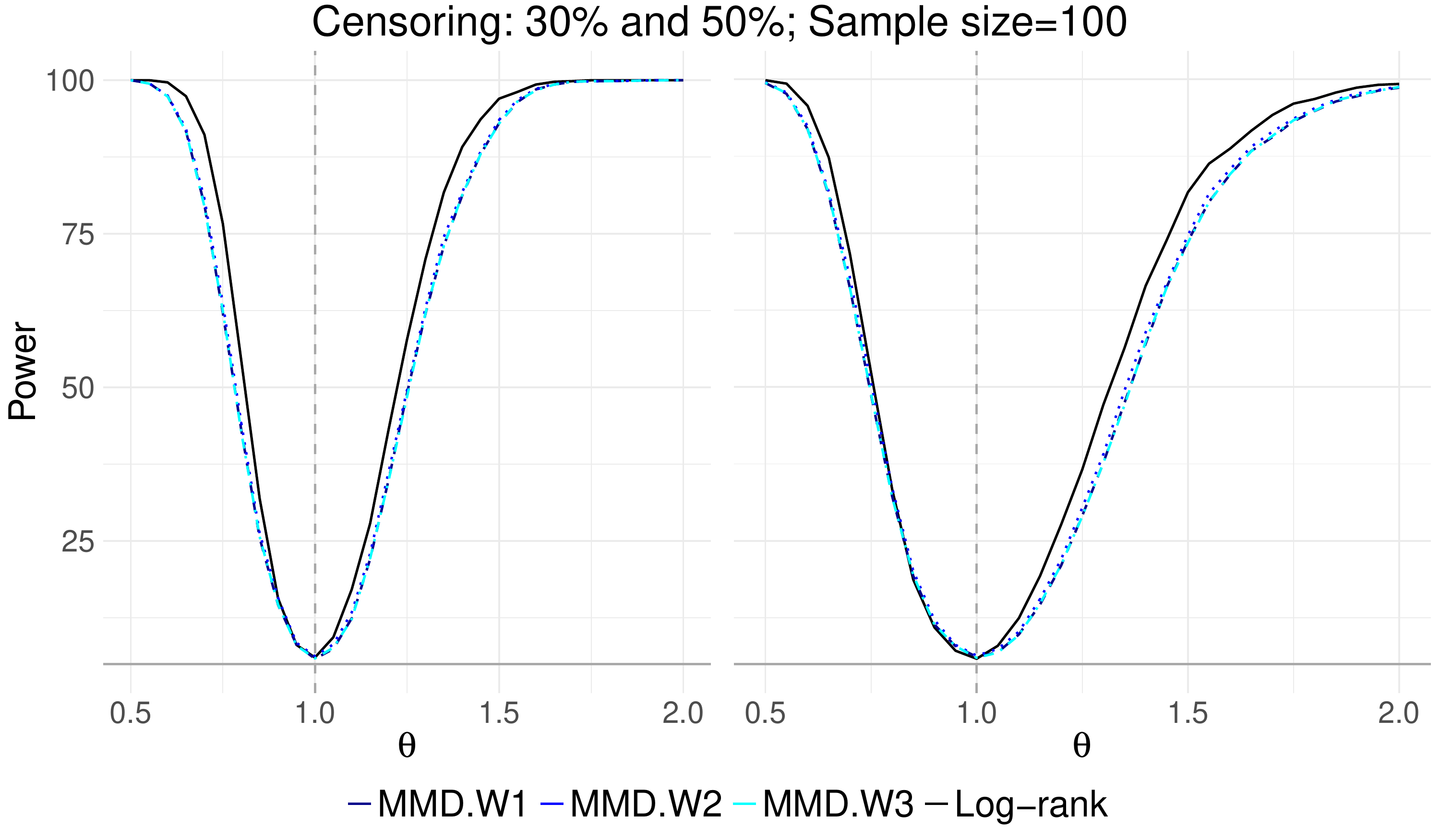}
  \end{tabular}\caption{Estimated power related to the family of parallel hazard alternatives $\lambda_{\theta}(t)=\theta$. The null is recovered, when $\theta=1$.}\label{Fig 1}
\end{center}
\end{figure}

Overall, our test performs strongly (across all wild bootstrap distributions), despite being more general: for the case of 30 data point our test is quite competitive, and when we observe 100 data points, the tests behave almost equally. Recall that $\theta =1$ represents the null hypothesis, so the closer to 1 we are, the harder it is for the test to discriminate from the null.  See Section \ref{sec:PropHazardExperiments} of the supplementary material for a table with the numerical values,  more sample sizes and combination of parameters, and other competitors.

\subsection{Time-dependent hazard functions}
In this section, we consider time-dependent hazard alternatives, which describe a risk that varies over time. Examples include clinical treatment that becomes less effective over time, or seasonal trends in selling patterns. We consider two particular instances of time-dependent hazard functions: i) periodic hazard functions and ii) Weibull hazard functions.

\subsubsection{Periodic hazard functions:}
Periodic hazard functions arise in scenarios exhibiting periodic patterns, including failure of machines in industrial life-testing, consumer behaviour, and labour market participation among other scenarios: see \cite{potter2001covariate} for  discussion.

\begin{figure}[H]
\begin{center}
 \begin{tabular}{@{}ccc@{}}
\includegraphics[width=.63\textwidth]{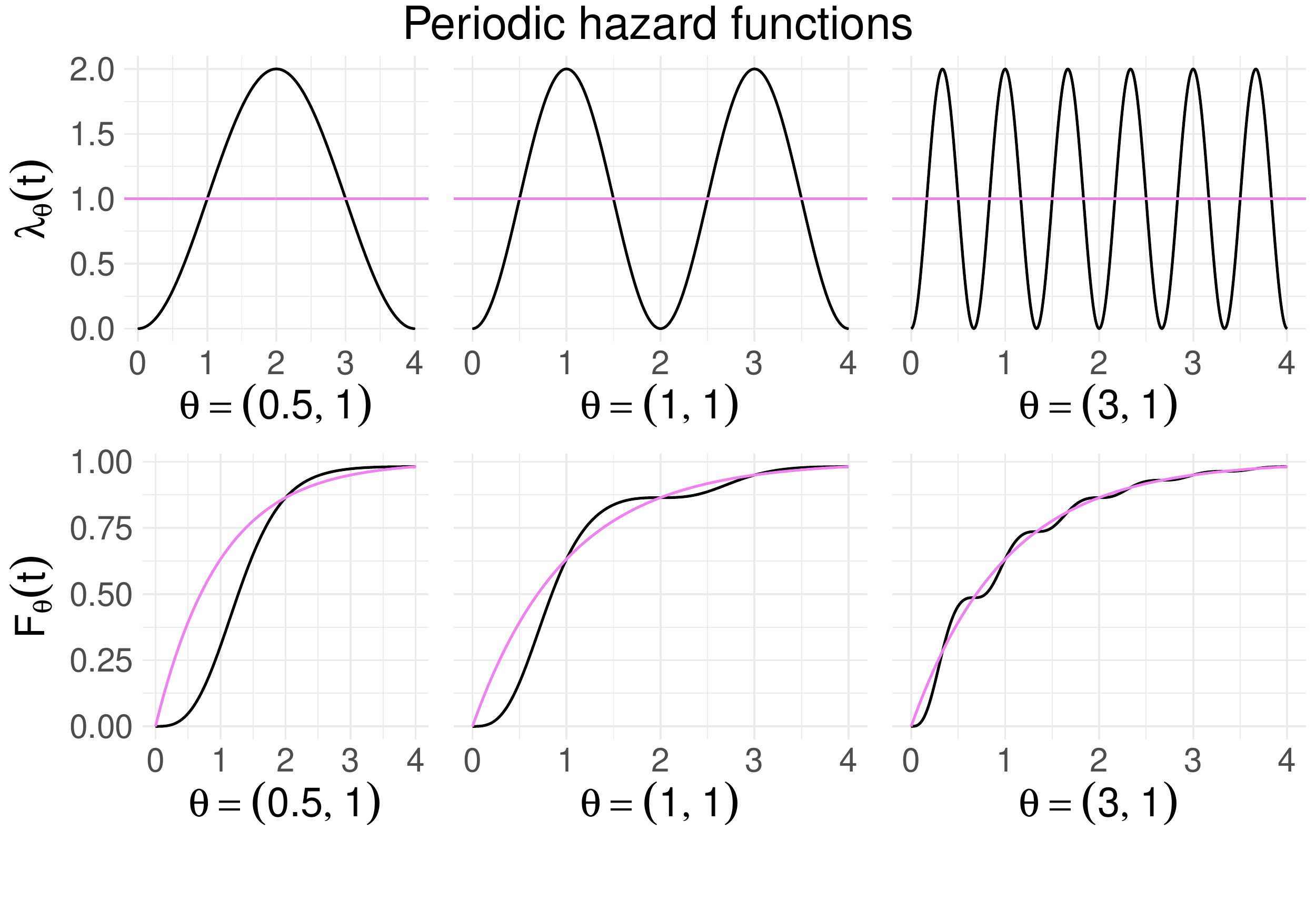}
  \end{tabular}\caption{Top row: hazard function $\lambda_{\theta}(t)=1-\cos(\theta_1\pi t)$ for $\theta_1\in\{0.5,1,3\}$. Bottom row: corresponding cumulative distribution.  In pink: exponential distribution (null).} \label{Fig:2}
\end{center}
\end{figure}
We consider the family of hazards functions given by $\lambda_{\theta}(t)=1-\theta_2\cos(\theta_1\pi t)$, with $\theta=(\theta_1,\theta_2)$ such that $\theta_2<1$ and $\theta_1\in\R$. The null hypothesis is given by $H_0:\lambda(t)=1$ (equivalently, $\theta_2=0$), and we consider the alternatives $\lambda_{\theta}(t)$ with $\theta_2=1$ and $\theta_1\in\{0.5,0.7,0.9,1,1.5,3\}$. 
Instead of fixing a censoring percentage, in this experiment we fix a censoring distribution. In particular we choose the censoring distribution to be $G(t)=1-e^{-\gamma t}$, with $\gamma\in\{1/2,2\}$. Each experiment considers $N=2000$ independent repetitions. 

In Figure \ref{Fig:2}, we show examples of hazards and their corresponding c.d.f. for different parameters of $\theta$. Notice that as the frequency parameter $\theta_1$ increases, the alternative distribution functions $F_{\theta}$ appears closer to the null distribution function $F_0(t)=1-\exp\{-t\}$, shown in pink.

\begin{multicols}{2}
\begin{table}[H]
\centering
\resizebox{1\columnwidth}{!}{
\begin{tabular}{rrrrarra}
  \hline
  \multicolumn{8}{c}{Periodic hazard function}\\
  \hline
  \multicolumn{8}{c}{Sample size n=30; Cens. param. $\gamma=1/2$; Adaptive length-scale}\\
  \hline
 $\boldsymbol{\alpha}$& MW1 & MW2 & MW3 & Pearson & LR1 & LR2 & WLR \\ 
  \hline
  \multicolumn{8}{l}{$\theta=(0.5,1)$}\\
   10 \% & 99.95 & 99.95 & 99.95 & 99.40 & 89.65 & 99.95 & 62.15 \\ 
  5 \% & 99.95 & 99.95 & 99.95 & 96.95 & 74.50 & 99.80 & 44.60 \\ 
  1 \% & 99.75 & 99.80 & 99.75 & 76.80 & 31.25 & 94.20 & 17.30 \\ 
  \hline
     \multicolumn{8}{l}{$\theta=(1,1)$}\\
 10 \% & 98.60 & 99.05 & 98.60 & 87.35 & 16.20 & 62.80 & 31.35 \\ 
  5 \% & 95.50 & 96.25 & 95.30 & 73.60 & 8.05 & 39.55 & 21.55 \\ 
  1 \% & 82.00 & 83.75 & 82.40 & 42.05 & 0.80 & 6.55 & 10.30 \\ 
  \hline
   \multicolumn{8}{l}{$\theta=(3,1)$}\\
   10 \% & 23.40 & 25.80 & 23.40 & 20.00 & 8.95 & 5.35 & 19.35 \\ 
  5 \% & 13.40 & 15.65 & 13.60 & 12.70 & 4.55 & 2.20 & 13.60 \\ 
  1 \% & 4.00 & 5.60 & 4.10 & 4.90 & 0.95 & 0.25 & 6.70 \\ 
   \hline
\end{tabular}}\caption{Power (from 0\% to 100\%) under different alternatives $\theta$ for the Periodic hazards experiment. Sample size 30. Censoring Parameter $1/2$. Adaptive length-scale.}\label{Table: Periodic 1}
\end{table}

\begin{table}[H]
\centering
\resizebox{1\columnwidth}{!}{
\begin{tabular}{rrrrrrrr}
  \hline
  \multicolumn{8}{c}{Periodic hazard functions}\\
  \hline
  \multicolumn{8}{c}{Sample size n=200; Cens. param. $\gamma=1/2$; Adaptive length-scale}\\
  \hline
 & MW1 & MW2 & MW3 & Pearson & LR1 & LR2 & WLR \\ 
  \hline
  \multicolumn{8}{l}{$\theta=(0.5,1)$}\\
10 \% & 100.00 & 100.00 & 100.00 & 100.00 & 100.00 & 100.00 & 100.00 \\ 
  5 \% & 100.00 & 100.00 & 100.00 & 100.00 & 100.00 & 100.00 & 100.00 \\ 
  1 \% & 100.00 & 100.00 & 100.00 & 100.00 & 100.00 & 100.00 & 100.00 \\ 
   \hline
     \multicolumn{8}{l}{$\theta=(1,1)$}\\
10 \% & 100.00 & 100.00 & 100.00 & 100.00 & 83.00 & 100.00 & 100.00 \\ 
  5 \% & 100.00 & 100.00 & 100.00 & 100.00 & 71.00 & 100.00 & 99.95 \\ 
  1 \% & 100.00 & 100.00 & 100.00 & 100.00 & 39.95 & 100.00 & 99.30 \\ 
   \hline
   \multicolumn{8}{l}{$\theta=(3,1)$}\\
   10 \% & 98.20 & 98.35 & 98.30 & 44.70 & 10.70 & 21.85 & 44.35 \\ 
  5 \% & 89.75 & 89.95 & 89.85 & 31.70 & 4.95 & 11.40 & 31.45 \\ 
  1 \% & 48.90 & 48.00 & 47.85 & 14.20 & 1.00 & 2.35 & 12.85 \\ 
   \hline
\end{tabular}}\caption{Power (from 0\% to 100\%) under different alternatives $\theta$ for the Periodic hazards experiment. Sample size 200. Censoring Parameter $1/2$. Adaptive length-scale.}\label{Table: Periodic 2}
\end{table}

\end{multicols}

In Tables \ref{Table: Periodic 1} and \ref{Table: Periodic 2} we show the result of our experiments for sample size $n=30$ and $n=200$, respectively. Our test strongly outperforms the other tests, and is able to discriminate the alternative from the null in the first two cases, while in the third case ($\theta = (3,1)$) it has the better overall result for both sample sizes. These results also confirm that periodic hazards with high frequencies present a more challenging task. We note moreover from Table \ref{Table: Periodic 1} that Pearson and WLR tests (in red) do not have the correct level for 30 samples (see Section \ref{sec:PeriodicHazardExperiments} of the supplementary material), and thus their reported power might be over-optimistic. In Section \ref{sec:PeriodicHazardExperiments} of the supplementary material  we include more sample sizes and more parameters, which achieve similar results.

\subsubsection{Weibull hazard functions}
Weibull models are popular in survival analysis and reliability applications where there is either an increasing or decreasing hazard rate. Their popularity is due to their flexibility, despite only being parametrised by two values. Weibull hazard functions has been used to model failure of composite materials, adhesive wear in metals, and fracture strength of glass among other examples, see \cite{Lai2006} for more details.

The Weibull hazard function is given by $\lambda_{\theta_w}(t)=\theta_{w1}/\theta_{w2}(t/\theta_{w2})^{\theta_{w1}-1}$, where $\theta_w=(\theta_{w1},\theta_{w2})\in\R_+^2$. In particular, $\theta_{w2}$ denotes the scale parameter which has an proportional effect on the hazard function, and $\theta_{w1}$ is the shape parameter. For $\theta_{w1}<1$ we have a decreasing hazard, for  $\theta_{w1}>1$ we have an increasing hazard, and for $\theta_{w_1}=1$  we recover a constant hazard representing the exponential distribution. The null hypothesis occurs when $\theta_w=(1,1)$ (shown in pink in Figures). In Figure~\ref{Fig:3} we show three different Weibull hazard functions (and their corresponding survival functions) with three types of behaviour: decreasing hazard, increasing concave, and increasing convex. 

\begin{figure}[H]
\begin{center}
 \begin{tabular}{@{}ccc@{}}
\includegraphics[width=.63\textwidth]{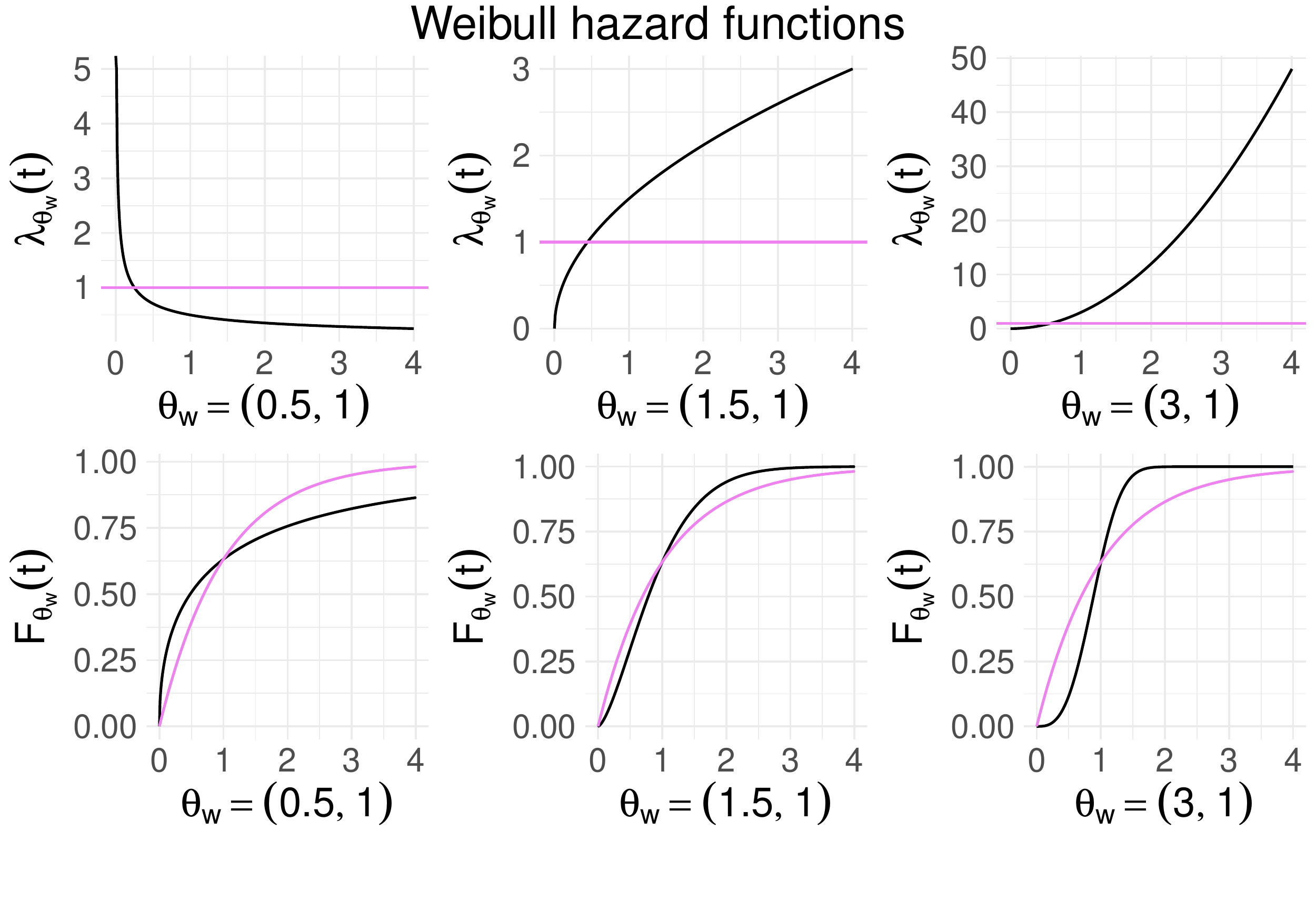}
  \end{tabular}\caption{Top row: hzard function $\lambda_{\theta_w}(t)=\theta_{w1}(t)^{\theta_{w1}-1}$, for $\theta_{w1} \in \{0.5,1.5,3\}$. Bottom row: The corresponding cumulative functions.} \label{Fig:3}
\end{center}
\end{figure}

\begin{multicols}{2}
\begin{table}[H]
\centering
\resizebox{1\columnwidth}{!}{
\begin{tabular}{rrrrarra}
\hline
\multicolumn{8}{c}{Weibull hazard functions}
\\\hline
\multicolumn{8}{c}{Sample size $n= 30 $; Censoring $ 30 \% $; Adaptive length-scale }
\\\hline
& MW1 & MW2 & MW3 & Pearson & LR1 & LR2 &WLR\\ 
\hline
\multicolumn{8}{l}{$\theta_w=(0.5,1)$}\\
10 \%& 67.20 & 70.70 & 67.30 & 70.25 & 28.55 & 71.30 & 70.55 \\ 
5 \%& 52.30 & 56.10 & 52.15 & 60.15 & 20.45 & 65.15 & 63.70 \\ 
1\%& 21.95 & 29.40 & 21.95 & 42.05 & 9.60 & 51.80 & 52.55 \\ 
\hline
\multicolumn{8}{l}{$\theta_w=(1.5,1)$}\\
10 \%& 45.35 & 48.20 & 45.50 & 52.20 & 3.65 & 8.45 & 33.40 \\ 
5 \%& 32.50 & 35.80 & 32.05 & 40.80 & 1.55 & 3.45 & 24.10 \\ 
1\%& 12.70 & 15.05 & 12.35 & 21.10 & 0.05 & 0.25 & 11.55 \\ 
\hline
\multicolumn{8}{l}{$\theta_w=(3,1)$}\\
10 \%& 100.00 & 100.00 & 100.00 & 100.00 & 0.05 & 43.40 & 99.55 \\ 
5 \%& 100.00 & 100.00 & 100.00 & 100.00 & 0.00 & 16.30 & 99.00 \\ 
1\%& 99.80 & 99.95 & 99.85 & 99.90 & 0.00 & 0.45 & 96.20 \\ 
   \hline
\end{tabular}}\caption{Power (from 0\% to 100\%) under different alternatives $\theta_{w1}$ for the Weibull hazards experiment. Sample size 30. Censoring percentage $30\%$. Adaptive length-scale.}\label{Table:Weibull 1}
\end{table}

\begin{table}[H]
\centering
\resizebox{\columnwidth}{!}{
\begin{tabular}{rrrrrrrr}
\hline
\multicolumn{8}{c}{Weibull hazard functions}
\\\hline
\multicolumn{8}{c}{Sample size $n= 200 $; Censoring $ 30 \% $; Adaptive length-scale }
\\\hline
& MW1 & MW2 & MW3 & Pearson & LR1 & LR2 &WLR\\ 
\hline
\multicolumn{8}{l}{$\theta_w=(0.5,1)$}\\
10 \%& 100.00 & 100.00 & 100.00 & 100.00 & 46.05 & 99.95 & 99.95 \\ 
5 \%& 100.00 & 100.00 & 100.00 & 100.00 & 36.80 & 99.90 & 99.95 \\ 
1\%& 100.00 & 100.00 & 100.00 & 100.00 & 20.90 & 99.30 & 99.60 \\ 
\hline
\multicolumn{8}{l}{$\theta_w=(1.5,1)$}\\
10 \%& 99.95 & 100.00 & 99.95 & 100.00 & 6.10 & 72.65 & 98.85 \\ 
5 \%& 99.90 & 99.85 & 99.90 & 100.00 & 2.75 & 56.50 & 96.85 \\ 
1\%& 98.70 & 98.95 & 98.65 & 99.80 & 0.45 & 22.40 & 88.35 \\ 
\hline
\multicolumn{8}{l}{$\theta_w=(3,1)$}\\
10 \%& 100.00 & 100.00 & 100.00 & 100.00 & 2.15 & 100.00 & 100.00 \\ 
5 \%& 100.00 & 100.00 & 100.00 & 100.00 & 0.20 & 100.00 & 100.00 \\ 
1\%& 100.00 & 100.00 & 100.00 & 100.00 & 0.00 & 100.00 & 100.00 \\ 
   \hline
\end{tabular}}\caption{Power (from 0\% to 100\%) under different alternatives $\theta_{w1}$ for the Weibull hazards experiment. Sample size 200. Censoring  percentage $30\%$. Adaptive length-scale.}\label{Table:Weibull 2}
\end{table}
\end{multicols}

In our experiments, we consider alternatives $\lambda_{\theta_w}$ with fixed $\theta_{w2}=1$ and $\theta_{w1}\in\{0.5,1,1.5,2,3\}$.  The censoring distribution is chosen to be $G(t)=1-e^{-\gamma t}$, where $\gamma$ is chosen i such way it produces $30\%$ and $50\%$ of censoring given a fixed alternative distribution. Each experiment considers $N=2000$ independent repetitions. 

In Tables \ref{Table:Weibull 1} and \ref{Table:Weibull 2} we show the result of our experiments for sample sizes $n=30$ and $200$. Our test again yields the best performance.  It also noticeable that Log-rank performs extremely poorly, which is a well-known failure mode for such hazards.

We remark that we should again treat the 
Pearson and WLR results with caution for sample size 30 as they given an  incorrect Type-1 error.
See tables in Section \ref{sec:WeiHazardExperiments} of the supplementary material to see the later and other experiments with more sample sizes and parameters (they achieve similar results).

\section{Discussion}

We have presented a novel testing procedure for goodness-of-fit for right-censored data, based on the MMD distance between a transformation of the observed variables and the uniform distribution.
Being based on kernels, it is not necessary to specify features in advance (as for the weighted log-rank test): rather, we take advantage of the infinite dictionary of features provided implicitly by the kernel.
Our approach has several advantages: First, it is simple to implement, since we only need to be able to evaluate the distribution $F_0$ in the survival times $T_i$ to generate the data $\{F_0(T_i),\Delta_i\}$, and we do not need to know/evaluate $F_0^{-1}$. Second,  the U-statistic kernel $J:\mathbb{R}\times\{0,1\}\to\mathbb{R}$ of equation \eqref{eqn:V-stat kernel} is distribution free, and need be computed/tabulated only once. 
Third, being a U-statistic, the asymptotic analysis is straightforward, as is the bootstrap approach for the test threshold.

We emphasize that extensions to other type of censoring (left and interval) are straightforward, as our test depends on censoring only through the estimate $\tilde F$, in which the mass of a censored interval is distributed uniformly over such an interval. 

Further improvements in the performance of our test might be achieved by a better choice of kernel function for the
problem at hand. In the case of the maximum mean discrepancy on uncensored data, test power is improved by choosing a kernel to optimise the ratio of the statistic to its variance \cite{SutTunStretal17}. Adaptive linear-time test statistics may also be constructed for two-sample \cite{JitSzaChwGre16} and Stein goodness-of-fit \cite{JitXuSzaFuketal17} tests, where the features are again chosen to maximise test power. It would be of interest to extend these ideas to the present setting.

\section{Acknowledgements} Tamara Fern\'andez was supported by Biometrika Trust.

\clearpage
\newpage
\bibliography{ref}
\bibliographystyle{plain}
\newpage
\section{Supplementary material}
\section{Proofs}\label{sec:AppendixProofs}

\subsection{Proof of Proposition \ref{Prop: Unbiased}}
\begin{proof}
Let $T=\min\{X,C\}$ and $\Delta=\ind\{T=X\}$ with $X\sim F$ independent of $C\sim G$. Let $F_0$ be a continuous distribution on $\R$ and define $U=F_0(T)$, then the joint distribution $Q$ of the pair $(U,\Delta)$ is given by
\begin{eqnarray}
\Prob(U\leq u,\Delta=1)&=&\Prob(\min\{F_0(X),F_0(C)\}\leq u, F_0(X)\leq F_0(C))\nonumber\\
&=&\Prob(F_0(X)\leq u, F_0(X)\leq F_0(C))\nonumber\\
&=&\Prob(X\leq F_0^{-1}(u), X\leq C)\nonumber\\
&=&\int_0^{F_0^{-1}(u)}\Prob(x\leq C)dF(x)\nonumber\\
&=&\int_{0}^{F_0^{-1}(u)}[1-G(x)]dF(x)\nonumber\\
&=&\int_{0}^{u}[1-G(F_0^{-1}(x))]dF(F_0^{-1}(x)),
\end{eqnarray}
and
\begin{eqnarray}
\Prob(U\leq u,\Delta=0)&=&\Prob(\min\{F_0(X),F_0(C)\}\leq u, F_0(X)> F_0(C))\nonumber\\
&=&\Prob(F_0(C)\leq u, F_0(X)> F_0(C))\nonumber\\
&=&\Prob(C\leq F_0^{-1}(u), X> C)\nonumber\\
&=&\int_0^{F_0^{-1}(u)}\Prob(X> c)dG(c)\nonumber\\
&=&\int_{0}^{F_0^{-1}(u)}[1-F(c)]dG(c)\nonumber\\
&=&\int_{0}^{u}[1-F(F_0^{-1}(c))]dG(F_0^{-1}(c)),
\end{eqnarray}

Let $u\in[0,1]$ be fixed, we define the random variable
$$Z^u(U,\Delta)=\ind\{U\leq u\}\Delta+\ind\{U\leq u\}(1-\Delta)\frac{u-U}{1-U}.$$
By the strong law of large numbers, it holds
$$\hat{F}(u)=\frac{1}{n}\sum_{i=1}^nZ_i^u\overset{a.s.}{\to}\E(Z^u),$$
where (computed by using the joint distribution $Q$ of the pair $(U,\Delta)$)
\begin{eqnarray}
\E(Z^u)&=&\int_0^u(1-GF_0^{-1}(s))dFF_0^{-1}(s)+\int_0^u\frac{u-s}{1-s}(1-FF_0^{-1}(s))dGF_0^{-1}(s)\nonumber\\
&=&FF_0^{-1}(u)-\left[FF_0^{-1}(u)-\int_0^uFF_0^{-1}(s)dGF_0^{-1}(s)\right]+\int_0^u\frac{u-s}{1-s}(1-FF_0^{-1}(s))dGF_0^{-1}(s)\nonumber\\
&=&FF_0^{-1}(u)+\int_0^u\frac{1}{1-s}[(1-s)(1-FF_0^{-1}(u))-(1-u)(1-FF_0^{-1}(s))]dGF_0^{-1}(s),\nonumber
\end{eqnarray}
(the second equality follows from integration by parts) which is an unbiased estimator of $FF_0^{-1}$ when $F=F_0$, in which case $E(Z^u)=u$. Observe that in the case of extreme censoring, for example $G$ is delta measure on zero, $\E(Z^u)=u$ which reflects our lack of information. 
\end{proof}

\subsection{Proof of Lemma \ref{Lemma: nMMD(x,tilde F)}}
\begin{proof}
By the main theorem of section 5.2.2. of \cite{Serfling80}, it suffices to prove that $\E(J(U,\Delta),(U',\Delta')^2)<\infty$, where $(U,\Delta),(U',\Delta')\overset{i.i.d.}{\sim}Q$ and that the kernel $J$ is degenerated. 
\subparagraph{Degeneracy:} For the degeneracy, note that
\begin{eqnarray}
\E[J((u,\delta),(U,\Delta))]&=&\E\left(\int_{0}^1\int_0^1K(x,y)(dx-dh_{u,\delta}(x))(dy-dh_{U,\Delta}(y))\right)\nonumber\\
&=&\E\left(\int_{0}^1\psi_{u,\delta}(y)(dy-dh_{U,\Delta}(y))\right),\nonumber
\end{eqnarray}
where $\psi_{u,\delta}(y)=\int_0^1K(x,y)(dx-dh_{u,\delta}(x))$. By using equation (3), it holds
\begin{eqnarray}
\E[J((u,\delta),(U,\Delta))]&=&\int_0^1\psi_{u,\delta}(y)dy-\E\left(\int_{U}^1\psi_{u,\delta}(y)\frac{1-\Delta}{1-U}dy+\Delta\psi_{u,\delta}(U)\right)\nonumber\\
&=&\int_0^1\psi_{u,\delta}(y)dy-\int_0^1\int_x^1\psi_{u,\delta}(y)dydG(x)-\int_0^1(1-G(x))\psi_{u,\delta}(x)dx\nonumber\\
&=&\int_0^1\psi_{u,\delta}(y)dy-\int_0^1 G(y)\psi_{u,\delta}(y)dy-\int_0^1(1-G(x))\psi_{u,\delta}(x)dx=0.\nonumber
\end{eqnarray}

\subparagraph{$\boldsymbol{\E(J^2)<\infty:}$}

We continue by checking the finite variance condition, that is $\E(J((U,\Delta),(U',\Delta'))^2)<\infty$. Under assumption 4.1, observe that
\begin{eqnarray}
\E(J((U,\Delta),(U',\Delta'))^2)&=&\E\left(\left[\int_0^1\int_0^1K(x,y)(dx-dh_{U,\Delta}(x))(dy-dh_{U',\Delta'}(y))\right]^2\right)\nonumber\\
&\leq&M^2\left(1+\E\left(\left(\int_0^1dh_{U,\Delta}(x)\right)^4\right)\right)\nonumber\\
&\leq& 2M^2,
\end{eqnarray}
since $h_{U,\Delta}(x)$ is a cumulative distribution function.
\subparagraph{Diagonal:}
We finalize by analysing the asymptotic behaviour of the diagonal term. Under assumption \ref{Assumption: bounded}, it holds
\begin{eqnarray}
\E(J((U,\Delta),(U,\Delta)))&=&\E\left(\int_0^1\int_0^1K(x,y)(dx-dh_{U,\Delta}(x))(dy-dh_{U,\Delta}(y))\right)\nonumber\\
&\leq&M\left(1+\E\left(\left(\int_0^1dh_{U,\Delta}(x)\right)^2\right)\right),\nonumber\\
&\leq&2M
\end{eqnarray}
since $h_{U,\Delta}(x)$ is a distribution function. Then by the strong law of large numbers, the diagonal of the $V$-statistic converges to
\begin{eqnarray}
n\text{Diag}=\frac{1}{n}\sum_{i=1}^nJ((U_i,\Delta_i),(U_i,\Delta_i))\overset{a.s.}{\to}\E(J((U,\Delta),(U,\Delta))).
\end{eqnarray}
\end{proof}

\subsection{Proof of Proposition \ref{Prop: MMD U}}

\begin{proof}
Equation \eqref{eqn: V-statistic} follows easily from equation \eqref{eqn:diff}. The unbiasedness of the U-statistic in equation \eqref{eqn: Ustatsic} follows from the degeneracy property (proved in Lemma \ref{Lemma: nMMD(x,tilde F)}). 
\end{proof}

\subsection{Proof of Proposition \ref{Prop: Consistency}}
In this section, instead of proving Proposition \ref{Prop: Consistency}, we prove a even stronger result. Suppose that each data point $i$ generates a (random) probability measure $\alpha_i$ in $[0,1]$ and suppose all these points are independent and thus the measures they represent are also independent. We define the measure $\alpha$ as the expected measure of $\alpha_i$, i.e. for  $A \subseteq [0,1]$ measurable, we define $\alpha(A) = \E(\alpha_i(A))$. In our setting, under the null it holds $\alpha_i([0,x)) = \ind\{U<x\}\left(\Delta_i+(1-\Delta)\frac{x-U}{1-U}\right)$, and $\alpha([0,x)) = \E(\alpha_i([0,x))) = x$ for $x \in [0,1]$. Our estimator $\tilde F(x)$ corresponds to $n^{-1}\sum_{i=1}\alpha_i([0,x))$. We prove the following theorem.

\begin{theorem}
Let $K:[0,1]^2 \to \R$ be a kernel such that it exist $M \geq 1$ with $|K(x,y)|\leq M$ for all $x,y$. Then
\begin{eqnarray}
\E\left(MMD\left(\frac{1}{n}\sum_i \alpha_i(\cdot), \alpha(\cdot)\right)^2 \right)\leq \frac{4M}{n}
\end{eqnarray}
and, moreover, for all $\varepsilon> 0$ and $n \geq 2$ it holds
\begin{eqnarray}
\Prob\left(\left|MMD\left(\frac{1}{n}\sum_i \alpha_i, \alpha\right)^2-\E\left( MMD\left(\frac{1}{n}\sum_i \alpha_i, \alpha\right)\right)^2\right|> \varepsilon \right) \leq  \exp\left(-\varepsilon^2 n/(32M^2) \right)
\end{eqnarray}
\end{theorem}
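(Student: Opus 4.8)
The plan is to reduce the statement to a standard fact about averages of independent, bounded, centred Hilbert-space vectors. Write $\xi_i:=\mu_{\alpha_i}-\mu_\alpha\in\mathcal H$, so that
\[
MMD\Big(\tfrac1n\textstyle\sum_i\alpha_i,\ \alpha\Big)^2=\normH{\tfrac1n\textstyle\sum_{i=1}^n\xi_i}^2 .
\]
The $\xi_i$ are independent (as the $\alpha_i$ are), centred, $\E\xi_i=\E\mu_{\alpha_i}-\mu_\alpha=0$ — here one uses that expectation commutes with the Bochner integral defining $\mu_{\alpha_i}$, which is legitimate since $|K|\le M$ — and uniformly bounded: since each $\alpha_i$ is a probability measure on $[0,1]$ and $|K|\le M$, one has $\normH{\mu_{\alpha_i}}^2=\int_0^1\int_0^1 K(x,y)\,d\alpha_i(x)\,d\alpha_i(y)\le M$, hence $\normH{\mu_{\alpha_i}}\le\sqrt M$, $\normH{\xi_i}\le 2\sqrt M$, and also $\normH{\tfrac1n\sum_i\xi_i}\le 2\sqrt M$; that is, $MMD(\tfrac1n\sum_i\alpha_i,\alpha)\le 2\sqrt M$ \emph{deterministically}, a bound I will use twice. (Taking $\alpha_i([0,x))=h_{U_i,\Delta_i}(x)$, so that $\tfrac1n\sum_i\alpha_i=\tilde F$ and $\alpha=\E(\tilde F)$, recovers Proposition~\ref{Prop: Consistency}.)

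For the first inequality, expand $\E\normH{\tfrac1n\sum_i\xi_i}^2=\tfrac1{n^2}\sum_{i,j}\E\innerH{\xi_i}{\xi_j}$; by independence and $\E\xi_i=0$ the off-diagonal terms vanish, leaving $\tfrac1{n^2}\sum_i\E\normH{\xi_i}^2\le\tfrac1{n^2}\cdot n\cdot 4M=\tfrac{4M}{n}$. (One may even get $M/n$ from $\E\normH{\xi_i}^2=\E\normH{\mu_{\alpha_i}}^2-\normH{\mu_\alpha}^2\le M$, but $4M/n$ is all that is claimed.)

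For the deviation bound I would apply McDiarmid's bounded-differences inequality to $\Phi:=MMD(\tfrac1n\sum_i\alpha_i,\alpha)=\normH{\tfrac1n\sum_i\xi_i}$, viewed as a function of the $n$ independent sources of randomness generating $\alpha_1,\dots,\alpha_n$. Replacing the $k$-th source by an independent copy changes $\tfrac1n\sum_i\xi_i$ by $\tfrac1n(\mu_{\alpha'_k}-\mu_{\alpha_k})$, so by the reverse triangle inequality and the diameter bound above, $\Phi$ changes by at most $c_k:=\tfrac{2\sqrt M}{n}$. Hence $\sum_{k=1}^n c_k^2=\tfrac{4M}{n}$, and McDiarmid gives $\Prob(|\Phi-\E\Phi|\ge t)\le 2\exp(-2t^2 n/(4M))=2\exp(-t^2 n/(2M))$ for every $t>0$. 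It remains to pass from $\Phi$ to $\Phi^2$: since $0\le\Phi\le 2\sqrt M$ and $0\le\E\Phi\le 2\sqrt M$,
\[
\big|\Phi^2-(\E\Phi)^2\big|=|\Phi-\E\Phi|\,(\Phi+\E\Phi)\le 4\sqrt M\,|\Phi-\E\Phi|,
\]
so $\{|\Phi^2-(\E\Phi)^2|\ge\varepsilon\}\subseteq\{|\Phi-\E\Phi|\ge\varepsilon/(4\sqrt M)\}$, and plugging $t=\varepsilon/(4\sqrt M)$ into the McDiarmid bound yields $\Prob(|\Phi^2-(\E\Phi)^2|\ge\varepsilon)\le 2\exp(-\varepsilon^2 n/(32M^2))$, which is the stated inequality (the universal factor $2$ being absorbable).

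Everything here is short; I expect the only genuinely substantive choice to be running the bounded-differences argument on $MMD$ rather than on $MMD^2$ directly — the latter would produce a strictly worse constant, whereas the former combined with the identity $a^2-b^2=(a-b)(a+b)$ and the deterministic bound $MMD\le 2\sqrt M$ gives exactly the exponent $\varepsilon^2 n/(32M^2)$. The one technical nicety is the exchange $\E\mu_{\alpha_i}=\mu_\alpha$ of expectation with the Bochner integral, which is routine under the uniform kernel bound; the hypothesis $n\ge 2$ enters only to tidy lower-order terms, and the rest is bookkeeping.
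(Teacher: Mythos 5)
Your proof is correct, and it follows the same overall strategy as the paper (centre the per-observation contributions, kill the cross terms by independence, and conclude with McDiarmid), but the execution differs in two respects that are worth recording. First, you work directly with the mean embeddings $\xi_i=\mu_{\alpha_i}-\mu_\alpha$ in $\mathcal H$, whereas the paper works with the signed measures $\beta_i=\alpha_i-\alpha$ and double integrals of $K$ against them. Your formulation makes the vanishing of the off-diagonal terms a one-line consequence of $\E\innerH{\xi_i}{\xi_j}=\innerH{\E\xi_i}{\E\xi_j}=0$, while the paper has to verify $\E\int K\,\beta_1(dx)\beta_2(dy)=0$ by approximating $K$ with simple functions and invoking dominated convergence; the price you pay is having to justify that $\mu_{\alpha_i}$ exists as a Bochner integral and that $\E\mu_{\alpha_i}=\mu_\alpha$, which you correctly flag and which is routine under $|K|\le M$. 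Second, for the concentration step the paper applies the bounded-differences inequality to $Z^2=MMD^2$ directly, bounding $|Z^2-Z'^2|\le 8M/n$ for $n\ge 2$ (this is where the hypothesis $n\ge2$ is actually used), whereas you apply it to $Z=MMD$ with increments $2\sqrt M/n$ and then convert via $|Z^2-(\E Z)^2|\le 4\sqrt M\,|Z-\E Z|$. The two routes give the identical exponent $\varepsilon^2 n/(32M^2)$ — so your parenthetical claim that running McDiarmid on $MMD^2$ directly would yield a strictly worse constant is not borne out, though this does not affect your argument. Two small discrepancies remain, both shared with or caused by the paper itself: (i) both routes produce the two-sided McDiarmid prefactor $2$, which the stated bound omits; (ii) your statement concentrates $Z^2$ around $(\E Z)^2$, matching the theorem as literally typeset, while the paper's proof concentrates around $\E(Z^2)$, matching Proposition \ref{Prop: Consistency}; the two centres differ by $\Var(Z)\le 4M/n$, so either version yields the consistency conclusion, but you should state explicitly which one you are proving.
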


\begin{proof}

Denote the signed measure $\beta_i =(\alpha_i-\alpha)$, and for shortness, denote $Z = MMD\left(\frac{1}{n}\sum_i \alpha_i, \alpha\right)$, then
\begin{eqnarray}\label{eqn:MMDZexpand}
Z^2 = \frac{1}{n^2}\int_{[0,1]^2} K(x,y)\sum_{i,j}  \beta_i(dx)\beta_j(dx)= \frac{1}{n^2}\sum_{i,j}\int_{[0,1]^2} K(x,y) \beta_i(dx)\beta_j(dx)
\end{eqnarray}

Using that $\E(Z) \leq \E(Z^2)^{1/2}$, we get
\begin{eqnarray}\label{eqn:EMMDZ}
\E Z \leq \
\frac{n-1}{n}\E\int_{[0,1]^2} K(x,y)\beta_1(dx) \beta_2(dy) + \frac{1}{n}\E\int_{[0,1]^2} K(x,y)\beta_1(dx) \beta_1(dy)
\end{eqnarray}
Note that $|\beta_1(dx)| \leq \alpha_1(dx)+\alpha(dx)$, then

\begin{eqnarray}\label{eqn:EMMDZbound2}
\E\int_{[0,1]^2} K(x,y)\beta_1(dx) \beta_1(dy) \leq M\E\int_{[0,1]^2} (\alpha_1(dx)+\alpha(x))(\alpha_1(dx)+\alpha(x)) = 4M
\end{eqnarray}
Now, we claim that
\begin{eqnarray}\label{eqn:EMMZclaim1}
\E\int_{[0,1]^2} K(x,y)\beta_1(dx) \beta_2(dy) = 0
\end{eqnarray}
as $\beta_1$ and $\beta_2$ are i.i.d measures, and for all measurable sets $A \subseteq [0,1]$ it holds $\E(\beta_1(A)) = 0$.
To check equation~\eqref{eqn:EMMZclaim1} we suppose that $K(x,y) = \sum_{k=1}^N c_k \int_{S_k}(x,y)$ where $S_k$ are rectangles in $[0,1]^2$, i.e. $K$ is a simple function in $[0,1]^2$. For a rectangle $S = [x,x']\times[y,y'] \in [0,1]^2$ we denote $S^1 = [x,x']$ and $S^2 = [y,y']$. Then 
$$\E\int_{[0,1]^2} K(x,y)\beta_1(dx) \beta_2(dy) = \sum_{k=1}^N c_k \E\beta_1(S_k^1)\E\beta_2(S_k^2) = 0$$
then as any arbitrary $K$ can be approximated by simple functions, as our kernel $K$ is bounded by $M$, and $\beta$ is the difference of two probability measures, by the dominated convergence theorem equation~\eqref{eqn:EMMZclaim1} holds for an arbitrary bounded kernel. This proves the first part of the theorem.

For the second part, i.e. concentration. Let $Z'$ be the random variable $Z$ but replacing the data point $j$ by another $j'$ (which is independent of everything). From equation~\eqref{eqn:MMDZexpand} it holds that $Z^2-Z'^2$ equals 

$$  \frac{1}{n^2}\int_{[0,1]^2} K(x,y)[\beta_j(dx)\beta_j(dy)-\beta_{j'}(dx)
\beta_{j'}(dy)]+ \frac{1}{n^2}\sum_{i\neq j} \int_{[0,1]^2} K(x,y) \beta_i(dx)(\beta_j(dy)-\beta_j'(dy))$$

Using the same argument as equation~\eqref{eqn:EMMDZbound2} the the absolute value of the above sum is less or equal than $8M/n$. By the McDiarmid inequality we obtain the result.
\end{proof}

\section{Kernel-based test}\label{Kernel test: results}
In this section, we show results for an extra competitor based on a kernel approach \cite{BagIoaKal13}. This approach considers a kernel density estimate for the survival function, i.e., for  $S=1-F$,  which is obtained by using a slightly modified Kaplan-Meier procedure. Then, the test-statistic is defined as the squared difference between this
density estimate and the model density. The implementation of this test was directly derived from the code made available by the authors.

Since this procedure relies on density estimation as an intermediate step, it has been found to be relatively data-inefficient,
compared with more direct tests, see e.g. the recent discussion in \cite{ditzhaus2018more}. For instance, for the periodic hazard experiment with censoring distribution $G(t)=1-e^{-1/2t}$, we obtain a fairly correct estimation of the Type-I error, see Table \ref{Table KernelType1}, (the null is recovered by considering $\theta_2=0$ in the model for the hazard function given by $\lambda_{\theta}(t)=1-\theta_2\cos(\theta_1\pi t)$).
\begin{multicols}{2}
\begin{table}[H]
\centering
\begin{tabular}{lrrr}
\hline
  \multicolumn{4}{c}{Type-I error}\\
  \hline
  $\boldsymbol{\alpha}$ &n=30 & n=50 & n=100 \\ 
  \hline
   10 \% &11.25&8.65&7\\
    5 \% &5.35&4.65&3.6\\
    1 \% &1.3&1.1&0.9\\
  \hline
\end{tabular}\caption{Estimated Type I error for the Periodic hazard experiments, censoring distribution $G(t)=1-e^{-1/2t}$.}\label{Table KernelType1}
\end{table}

\begin{table}[H]
\centering
\begin{tabular}{lrrr}
\hline
  &\multicolumn{3}{c}{Power}\\
  \hline
  $\boldsymbol{\alpha=5\%}$ &n=30 & n=50 & n=100 \\ 
  \hline
    $\theta=(0.5,1)$ &22.4&43.4&73.45\\
    $\theta=(0.9,1)$ &3.65&4.95&14.1\\
    $\theta=(1.5,1)$ &3.35&5.4&10.4\\
  \hline
\end{tabular}\caption{Estimated Power for the Periodic hazard experiments, censoring distribution $G(t)=1-e^{-1/2t}$.}\label{Table KernelPower}
\end{table}

\end{multicols}
Nevertheless for alternatives that are distinguishable by other tests, as for example for the periodic hazard setting with censoring distribution $G(t)=1-e^{-1/2t}$ and true parameters $\theta=(\theta_1,1)$ and $\theta_1\in\{0.5,0.9,1.5\}$, we obtain discouraging results for small sample sizes, see Table \ref{Table KernelPower}. Indeed, even though we observe an increment of the power with the sample size, the overall power attained by this particular test is clearly inferior when compared to all the other competitors. This behaviour is also observed for the parallel and Weibull experiments. Therefore, we omit this test from our comparisons. Tables were generated by using 2000 independent experiments.

\section{Proportional hazards experiment}\label{sec:PropHazardExperiments}
For this experiment, we consider the family of alternatives $\lambda_{\theta}(t)=\theta$ where $\theta\in\{0.5+0.05k;k=1,\ldots,30\}$. The censoring distribution is chosen to be $G(t)=1-e^{-\gamma t}$, where $\gamma$ is chosen i such way it produces $30\%$ and $50\%$ of censoring given a fixed alternative distribution. 
\subsection{Type-I error}
Estimated Type-I error. In {\color{red}{red}} we observe tests that have an clear incorrect level. In{ \color{orange}{orange}}, we observe tests that have a questionable incorrect level. Tables are based on $2000$ independent experiments. 

\begin{multicols}{2}
\textbf{Fixed length-scale 1; Censoring $30\%$}
\begin{table}[H]
\centering
\resizebox{1\columnwidth}{!}{

\end{table}
\end{landscape}

\section{Periodic hazard functions}\label{sec:PeriodicHazardExperiments}
Let $\theta=(\theta_1,\theta_2)$ such that $\theta_2<1$ and $\theta_1\in\R$. We consider the family of hazard functions $\lambda_{\theta}(t)=1-\theta_2\cos(\theta_1\pi t)$ for fixed $\theta_2=1$ and $\theta_1\in\{0.5,0.7,0.9,1,1.5,3\}$. We include plots of the hazard function, distribution function and distribution of the transformed data $U=F_0(X)$ for each combination of alternative and null hypothesis. The null hypothesis, shown in the plots in color pink, is denoted by $F_0(t)=1-e^{-t}$ and $\lambda_0(t)=1$. 

Instead of fixing a censoring percentage, in this experiment we fix a censoring distribution. In particular we choose the censoring distribution to be $G(t)=1-e^{-\gamma t}$, with $\gamma\in\{1/2,2\}$. Notice the censoring percentage varies for different choices of alternative distributions, determined by the parameter $\theta$, and censoring parameter $\gamma$. \\

\begin{tabular}{c}
   \includegraphics[width=0.9\textwidth]{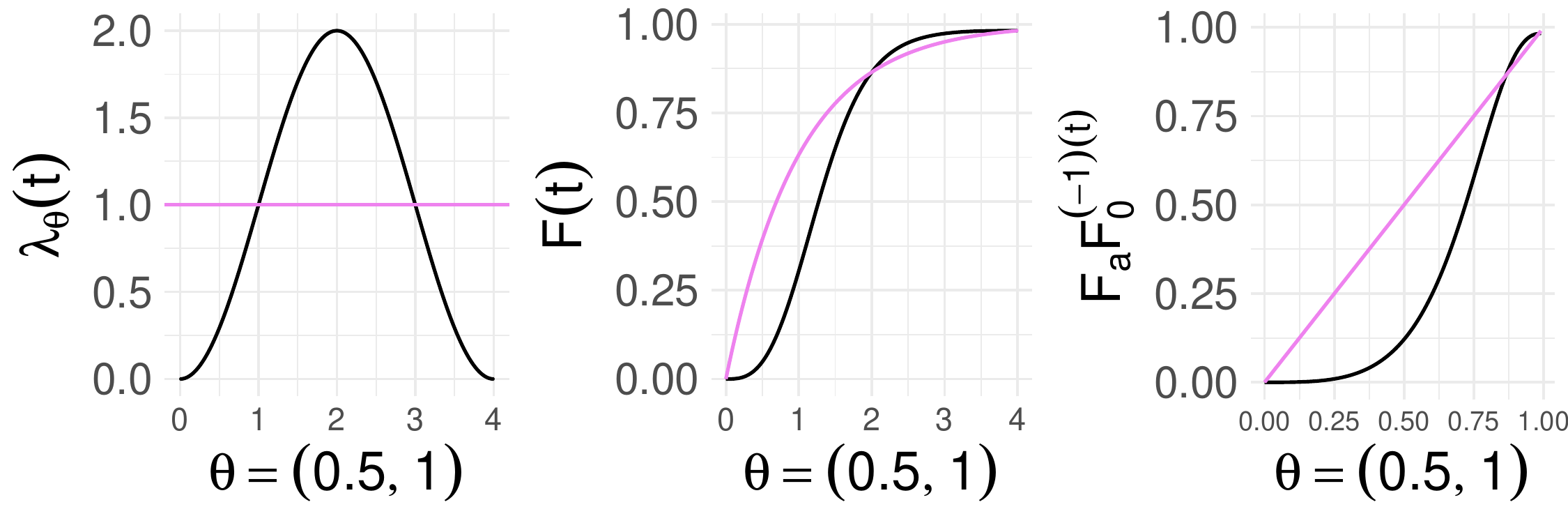}\\
        \includegraphics[width=0.9\textwidth]{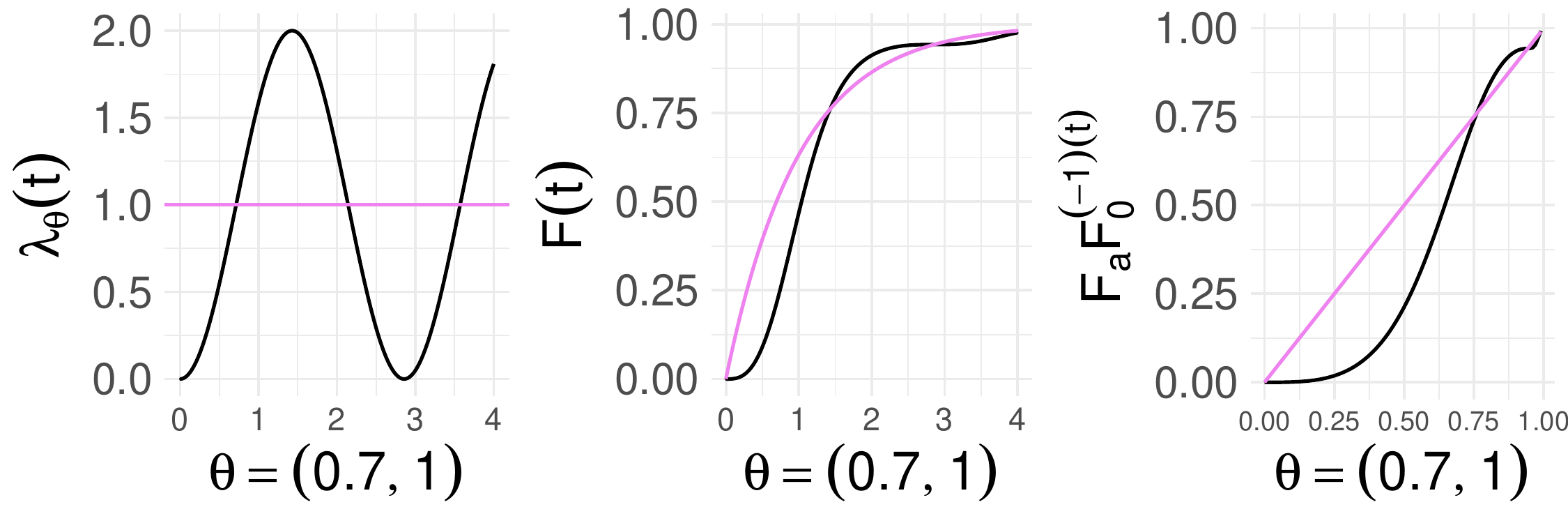}\\
        \includegraphics[width=0.9\textwidth]{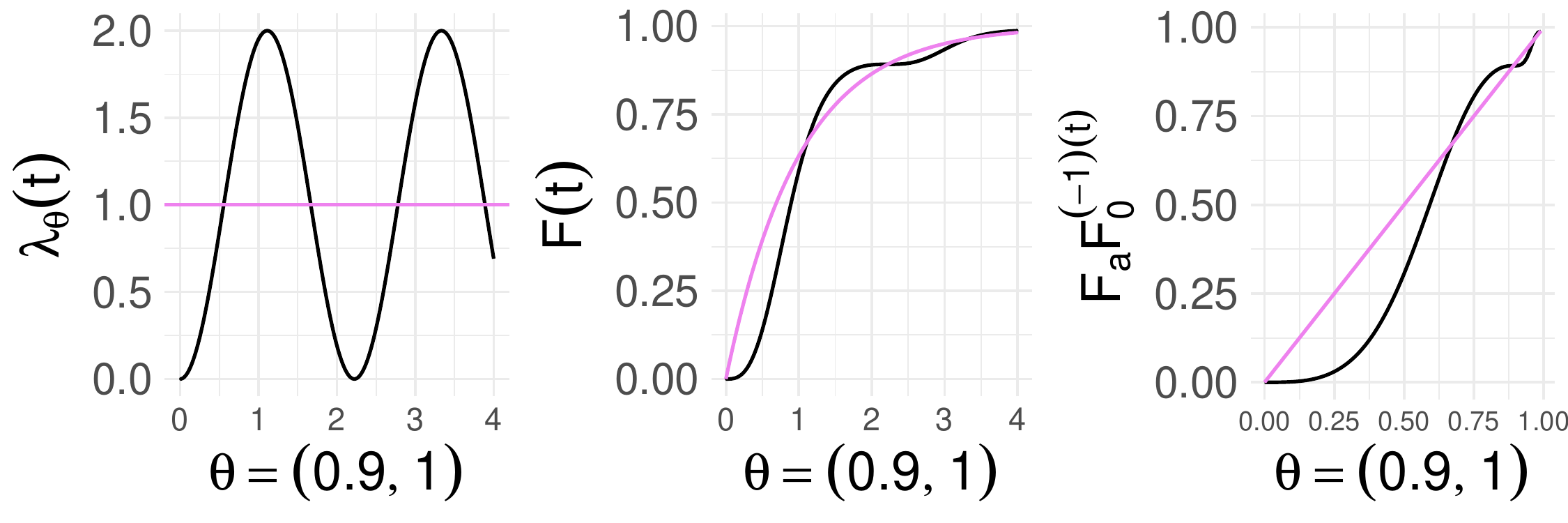}
        \end{tabular}

\begin{tabular}{c}
        \includegraphics[width=0.9\textwidth]{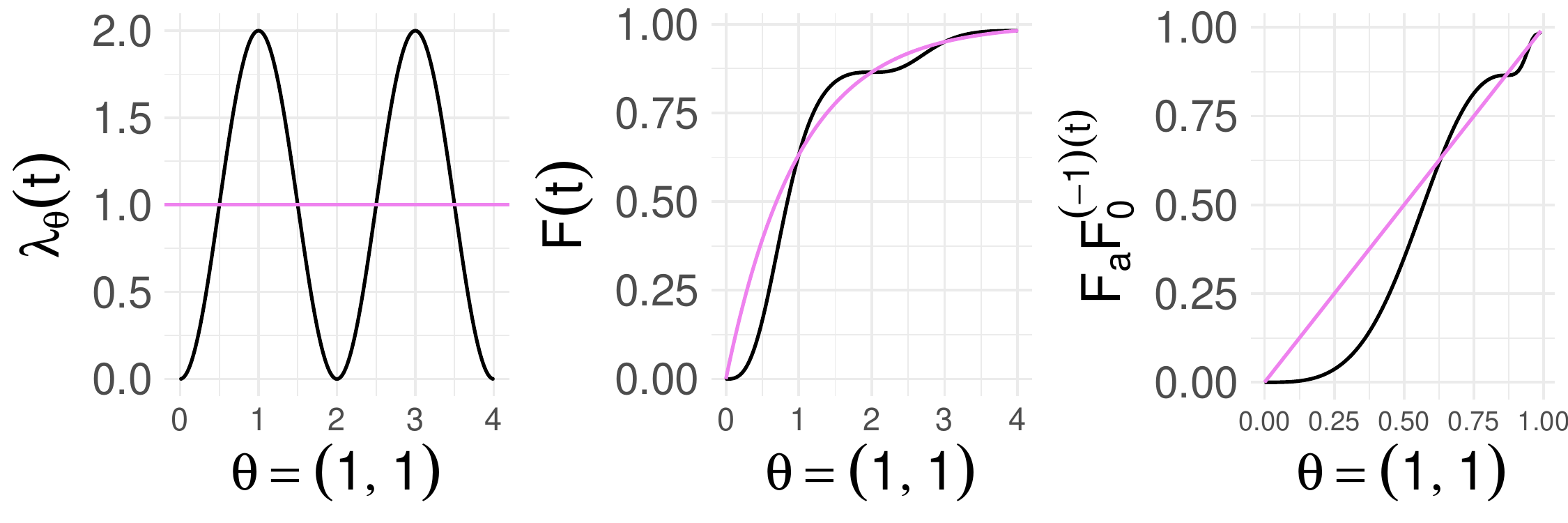}\\        
        \includegraphics[width=0.9\textwidth]{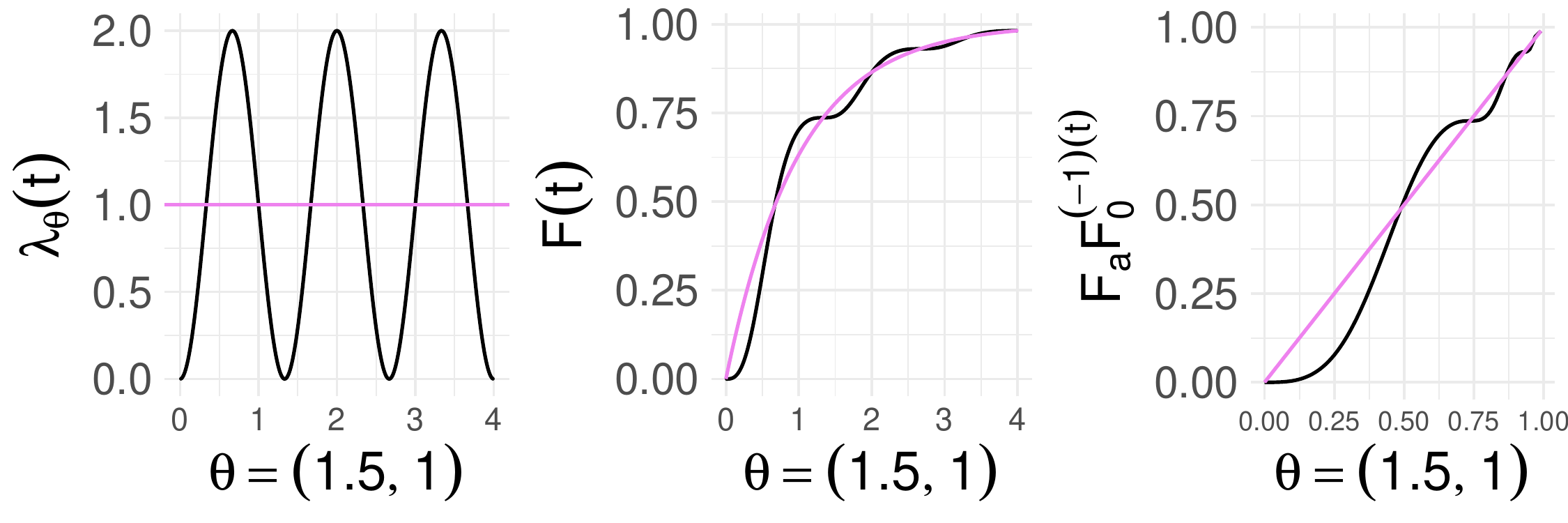}\\
        \includegraphics[width=0.9\textwidth]{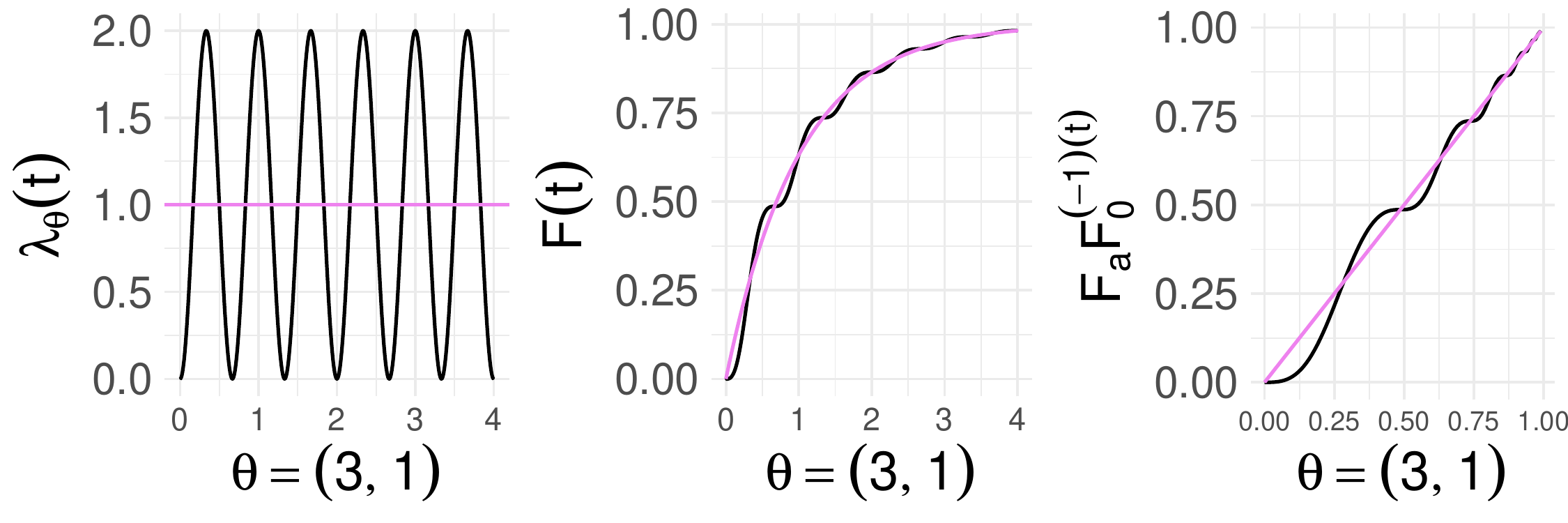}
\end{tabular}
\newpage
\subsection{Type I error}
For this experiment, the null hypothesis is recovered when $\theta_2=0$. In {\color{red}{red}} we observe tests that have an clear incorrect level. In{ \color{orange}{orange}}, we observe tests that have a questionable incorrect level. Tables are based on 2000 independent experiments.
\begin{multicols}{2}
\textbf{Fixed length-scale 1}\\
\textbf{Censoring distribution $G(t)=1-e^{-1/2t}$}
\begin{table}[H]
\centering
\resizebox{1\columnwidth}{!}{

}
\end{table}
\end{multicols}
\newpage

\section{Weibull hazard functions}\label{sec:WeiHazardExperiments}
Let $\theta_w=(\theta_{w1},\theta_{w2})\in\R_+^2$.
We consider the hazard functions $\lambda_{\theta_w}=\theta_{w1}/\theta_{w2}(t/\theta_{w2})^{\theta_{w1}-1}$ for fixed $\theta_{w2}=1$ and $\theta_{w1}\in\{0.5,1,1.5,2,3\}$.  The censoring distribution is chosen to be $G(t)=1-e^{-\gamma t}$, where $\gamma$ is chosen i such way it produces $30\%$ and $50\%$ of censoring given a fixed alternative distribution. For this model, the null hypothesis is recovered when $\theta_w=(1,1)$ (shown in pink). 

\begin{tabular}{c}
   \includegraphics[width=0.9\textwidth]{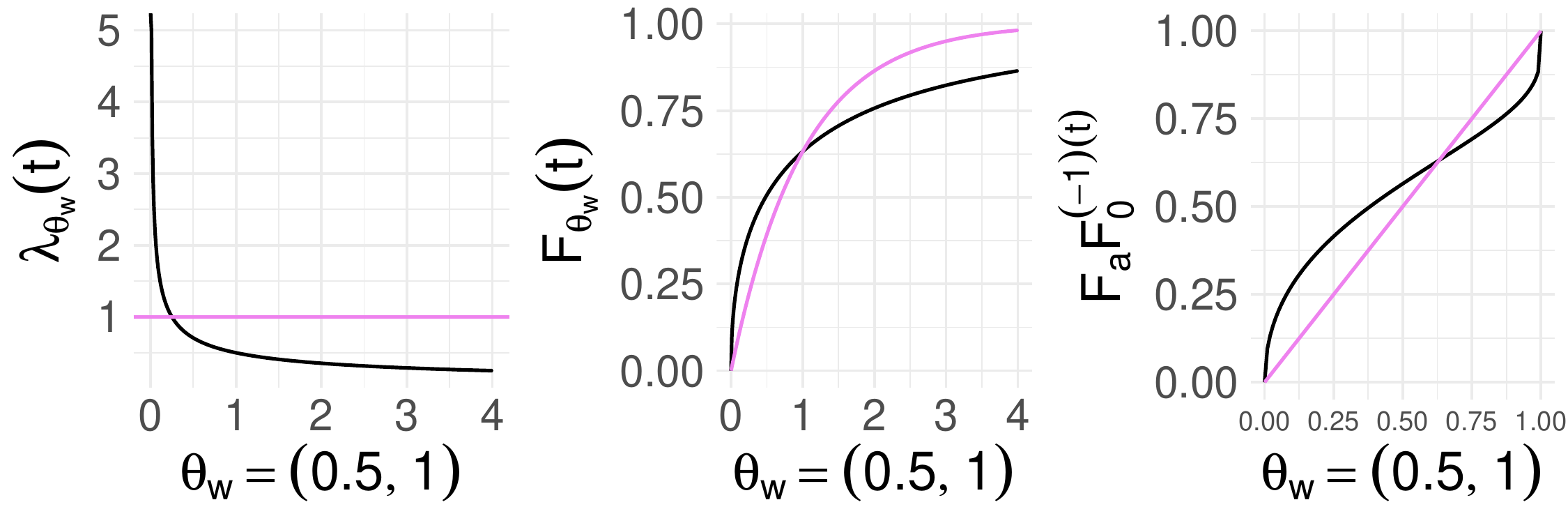}\\
        \includegraphics[width=0.9\textwidth]{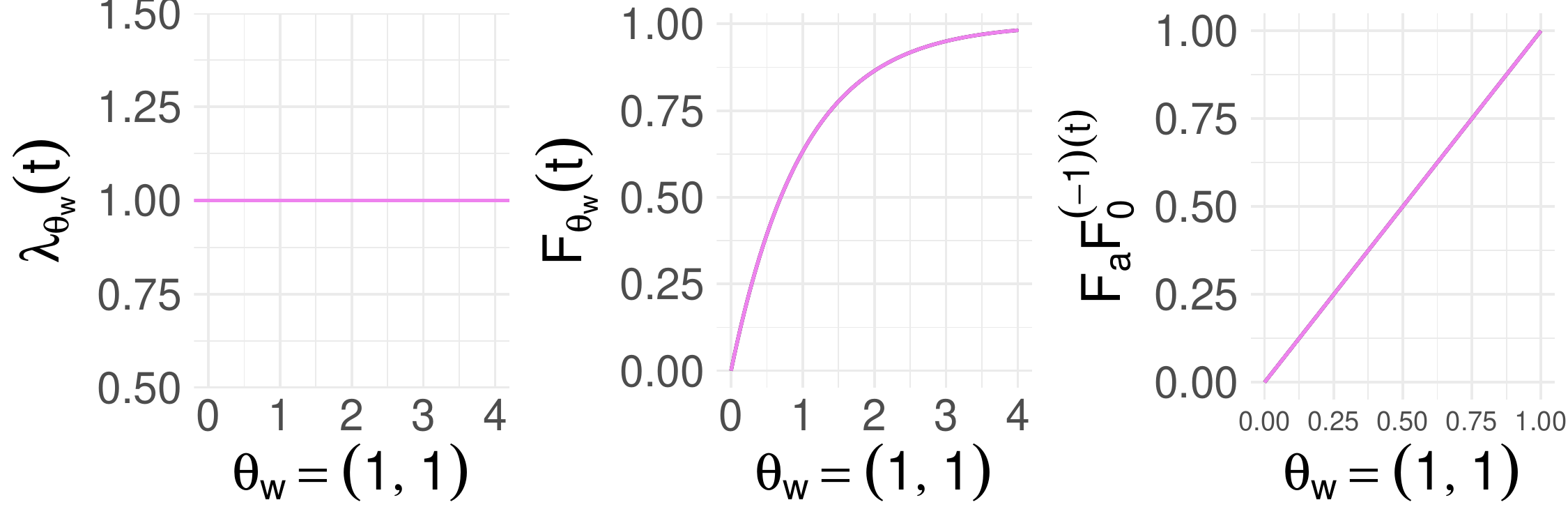}\\
        \includegraphics[width=0.9\textwidth]{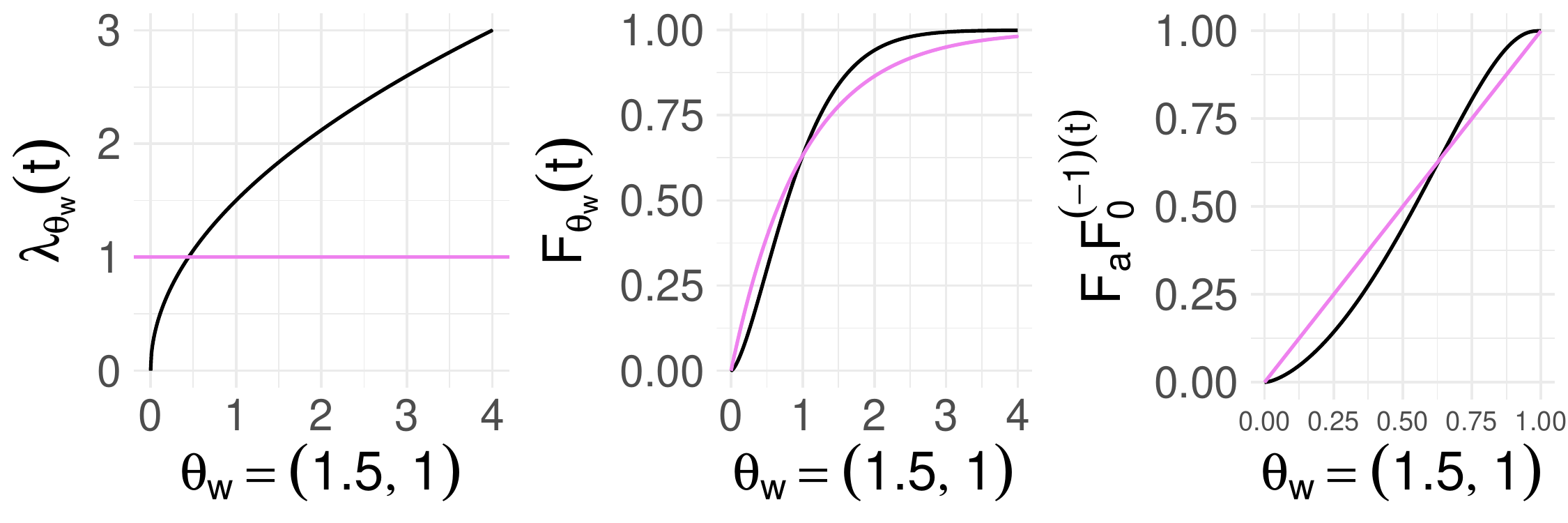}
\end{tabular}

\begin{tabular}{c}
        \includegraphics[width=0.9\textwidth]{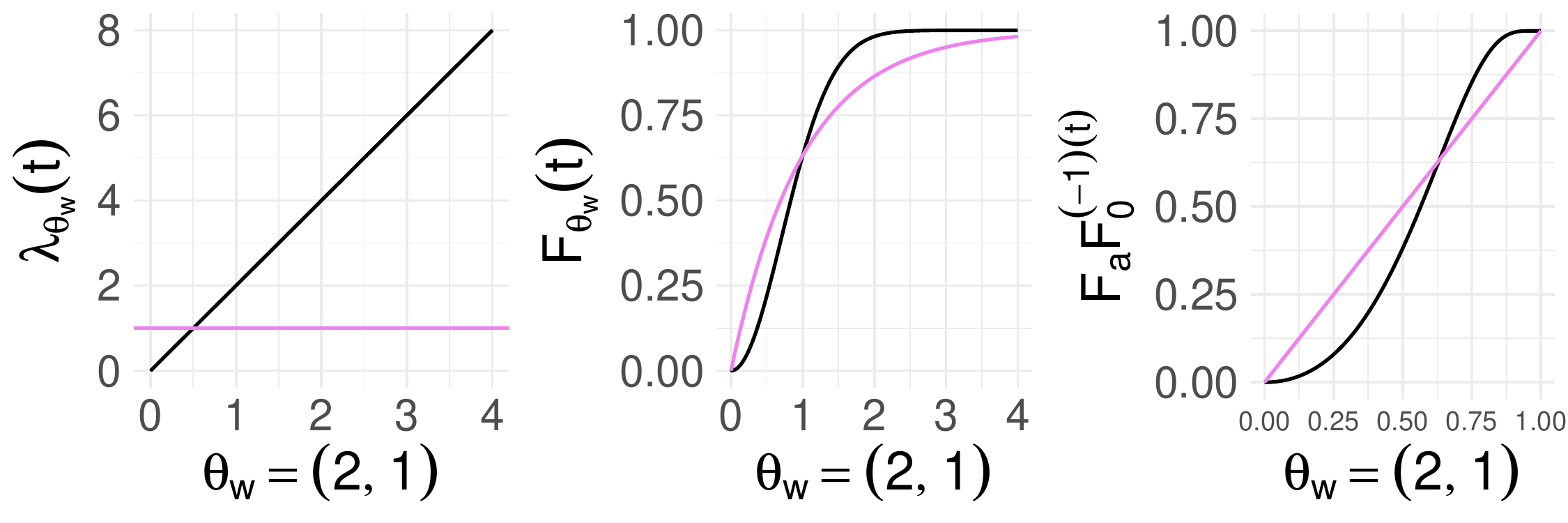}\\
   \includegraphics[width=0.9\textwidth]{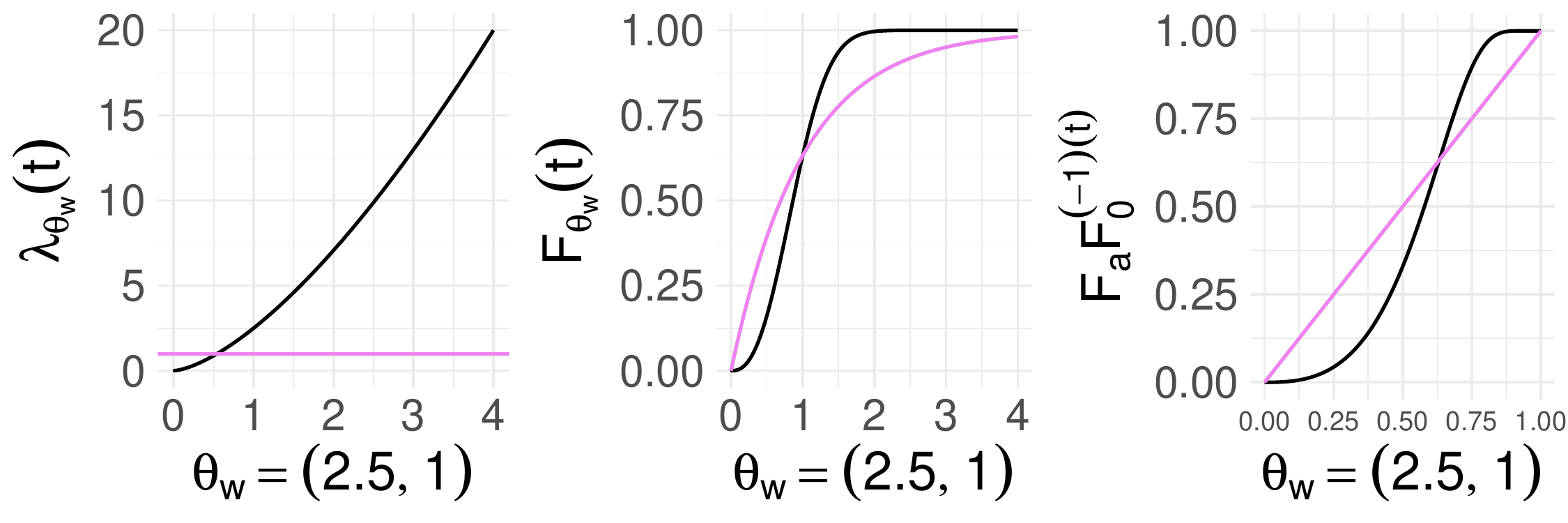}\\
        \includegraphics[width=0.9\textwidth]{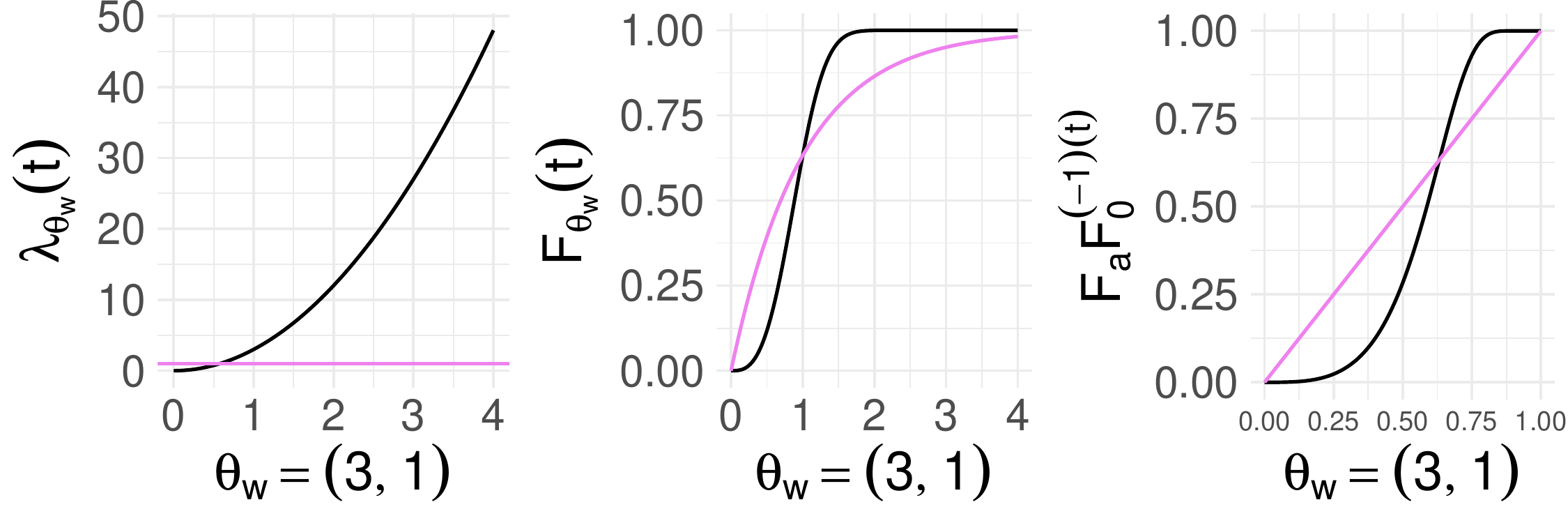}      
\end{tabular}
\newpage
\subsection{Type-I error}
The power tables include an estimation of the Type-I error for $\theta_w=(1,1)$ 
\subsection{Power tables}
In {\color{red}{red}} we observe tests that have an clear incorrect level and thus estimated power must be looked with caution. In{ \color{orange}{orange}} are tests that have a questionable incorrect level. Tables are based on $2000$ independent experiments. 

\subsubsection{Sample size $30$}
\begin{multicols}{2}
\textbf{Fixed length-scale 1}\\
\textbf{Censoring percentage $30\%$}
\begin{table}[H]
\centering
\resizebox{1\columnwidth}{!}{

}
\end{table}
\end{multicols}

\end{document}